\newcommand{\be}{\begin{equation}}
\newcommand{\ee}{\end{equation}}
\newcommand{\ba}{\begin{eqnarray}}
\newcommand{\ea}{\end{eqnarray}}
\newtheorem{thm}{Theorem}
\newtheorem{lem}{Lemma}
\newtheorem{defini}{Definition}
\renewenvironment{proof}{\textit{Proof.}}{{$\hfill\blacksquare$}}
\renewcommand*\env@matrix[1][*\c@MaxMatrixCols c]{%
  \hskip -\arraycolsep
  \let\@ifnextchar\new@ifnextchar
  \array{#1}}
\begin{document}

\preprint{PRE/003}

\title{Classifying, quantifying, and witnessing qudit-qumode hybrid entanglement}

\author{Karsten Kreis}%
 \email{karsten.kreis@mpl.mpg.de}

\author{Peter van Loock}%
\email{peter.vanloock@mpl.mpg.de}

\affiliation{Optical Quantum Information Theory Group, Max Planck Institute for the Science of Light, G\"unther-Scharowsky-Str. 1/Bau 26, D-91058 Erlangen, Germany}
\affiliation{Institute of Theoretical Physics I, Universit\"at Erlangen-N\"urnberg, Staudtstr. 7/B2, D-91058 Erlangen, Germany}

\date{\today}

\pacs{03.67.Mn}
\keywords{hybrid entanglement, entanglement witnessing, true hybrid entanglement}

\begin{abstract}
Recently, several hybrid approaches to quantum information emerged which utilize both continuous- and discrete-variable methods and resources at the same time. In this work, we investigate the bipartite hybrid entanglement between a finite-dimensional, discrete-variable quantum system and an infinite-dimensional, continuous-variable quantum system. A classification scheme is presented leading to a distinction between pure hybrid entangled states, mixed hybrid entangled states (those effectively supported by an overall finite-dimensional Hilbert space), and so-called truly hybrid entangled states (those which cannot be described in an overall finite-dimensional Hilbert space). Examples for states of each regime are given and entanglement witnessing as well as quantification are discussed. In particular, using the channel map of a thermal photon noise channel, we find that true hybrid entanglement naturally occurs in physically important settings. Finally, extensions from bipartite to multipartite hybrid entanglement are considered.
\end{abstract}

\maketitle

\section{Introduction}
In quantum-information science, the highly intriguing and nonclassical phenomenon of entanglement is at the heart of virtually all applications \cite{4Horodeckis}. Being first considered by Einstein, Podolski, and Rosen, presenting the so-called EPR-paradox \cite{EPR}, entanglement has been in the focus of much research, especially in the recent past. However, there are still a lot of open questions remaining. When dealing with entanglement, one normally deals either with purely discrete-variable (DV) quantum states which live in a finite-dimensional Hilbert space, or with fully continuous-variable (CV), infinite-dimensional quantum systems. The toolbox for analysis is quite different for the two settings. While in the DV case, density matrices provide a complete and convenient representation, the investigation of CV states, due to their infinite dimensionality, is more subtle. However, at least in the Gaussian case, finitely many first and second moments are sufficient for a compact and complete representation \cite{Plenio,Eisert,Adesso,Paris,Braunstein}. 

Recently, so-called \textit{hybrid} protocols have emerged which utilize both CV and DV resources at once \cite{PvLx,Furu,Spagnolo,XWang,SLloyd}. These resources may include CV and DV states as well as CV and DV quantum operations and measurement techniques. It is worth noting that the term \textit{hybrid}, which seems to be quite in vogue at the moment, is used in different contexts in quantum information. For example, there are proposals considering \textit{hybrid} quantum devices which combine elements from atomic and molecular physics as well as from quantum optics and also solid-state physics \cite{Wallquist}. Furthermore, there is the notion of \textit{hybrid} entanglement, referring to entanglement between different degrees of freedom, for example, the entanglement between spatial and polarization modes \cite{Neves,Gabriel}. However, in the present work, we use \textit{hybrid} in the above first-mentioned sense and define \textit{hybrid entanglement} as the entanglement between a finite-dimensional, DV quantum system and an infinite-dimensional, CV quantum system. The prime example is an entangled state between an atomic spin and an electromagnetic mode. As already pointed out, the description of CV and DV states would typically differ. Hence, combining CV and DV quantum systems has its own characterizing and challenging subtleties. So, why would it be useful to consider such hybrid approaches?

In CV quantum computation, Gaussian states as well as Gaussian transformations, such as beam splitting and squeezing, are used. However, to reach computational universality just linear Gaussian elements are not sufficient \cite{SLloyd2}. At least one non-Gaussian component is necessary. Actually, any quantum computer utilizing only linear elements could be efficiently simulated by a classical computer \cite{Bartlett}. This single non-Gaussian element is the main challenge in CV quantum computation, as it is very difficult to efficiently realize such non-Gaussian transformations. In DV quantum computation with photons, the encoding of information takes place in a finite-dimensional subspace of the infinite-dimensional Fock space. Just as in the CV case, for deterministic processing, a nonlinear interaction is required to realize DV universality \cite{Lutkenhaus}. But when truncating the Fock space, only single or few-photon states are left. The drawback of optical DV quantum computation then is that nonlinear interactions on the few-photon level are hard to achieve. Note that a well-known efficient protocol for universal DV computation with only linear optics is still probabilistic (or near-deterministic at the expense of complicated states entangled between sufficiently many photons) \cite{KLM}.

A way out of the problems of CV and DV quantum computation may be provided by those hybrid approaches. The scheme by Gottesman, Kitaev, and Preskill, which makes use of CV Gaussian states and transformations in combination with DV photon number measurements, can be considered one of the first hybrid protocols for quantum computation \cite{GKP}. So-called non-Gaussian phase states are created from Gaussian two-mode squeezed states with the aid of photon counting measurements. Additionally, the protocol can be considered as hybrid, as it employs the concept of encoding logical DV qubits into CV harmonic oscillator modes (qumodes).

However, there are other situations in which one may benefit from combining CV and DV techniques. For example, hybrid entangled states can be exploited for the generation of coherent-state superpositions within the framework of cavity QED \cite{Savage}. Furthermore, there are quantum key distribution schemes which make use of hybrid entanglement \cite{Lorenz,Rigas,Wittmann}. Finally, so-called qubus, i.e., quantum-bus-based, schemes have been developed for establishing entanglement between distant qubits. These also involve hybrid entanglement \cite{qubus1,qubus2,qubus3}.

We can conclude that hybrid entanglement is a key ingredient of various recent quantum-information protocols. In this paper, we perform a thorough classification of hybrid entangled states with the focus on bipartite hybrid entanglement. Nevertheless, also multipartite hybrid entangled states shall be briefly discussed at the end of the paper. In addition to the derivation of a complete classification scheme, which distinguishes between DV-like hybrid entanglement and true hybrid entanglement, also entanglement witnessing and quantification of hybrid entanglement will be discussed.

The paper is organized as follows. In Sec. \ref{sec:2}, we derive the classification scheme for bipartite hybrid entangled states. First, the non-Gaussianity of hybrid entanglement is proved.
Then we introduce an inverse Gram-Schmidt process, which is used for the derivation of the classification scheme. In Sec. \ref{sec:3}, examples for each class of hybrid entangled states are presented. Section \ref{sec:4} briefly discusses multipartite hybrid entanglement, before we give a conclusion in Sec. \ref{sec:5}. Appendices \ref{Apx:PrfLemma} and \ref{Apx:AuxCal} present auxiliary calculations.

\section{Classifying bipartite hybrid entanglement}\label{sec:2}
Let us start with a rigorous definition of bipartite hybrid entanglement.
\begin{defini}
Any entangled bipartite state of the form
\begin{equation}\label{eq:HEdef}
\begin{aligned}
\hat{\rho}^{AB} &=\sum_{n=1}^N p_n\,\ket{\psi_n}_{AB}\bra{\psi_n}\,,\quad p_n>0\;\forall \;n\,;\;\sum_{n=1}^N \;p_n=1, \\
\ket{\psi_n} &=\sum_{m=0}^{d-1} c_{nm}\ket{m}^A\ket{\psi_{nm}}^B\,,\\
c_{nm}&\in\mathbb C\;\forall \;n,m\,;\;\sum_{m=0}^{d-1} \;|c_{nm}|^2=1,
\end{aligned} 
\end{equation}
for $1\leq N\leq\infty$ with generally nonorthogonal qumode state vectors $\ket{\psi_{nm}}^B$, defined in the total Hilbert space ${\mathcal H}^{AB}={\mathcal H}_{d}^A\otimes{\mathcal H}_{\infty}^B$ with finite $d\geq2$, is called \textup{hybrid entangled}.
\end{defini}
Since every bipartite hybrid entangled state can be written in such a pure-state decomposition, this definition is complete. Note that later, in Sec. \ref{subsec:2.c}, the rank parameter $N$ will determine one of three possible classes of hybrid entangled states.

\subsection{Non-Gaussianity}\label{subsec:2.a}
How can hybrid entangled states actually be described in a convenient way? Can their entanglement be
quantified? For an overall infinite-dimensional Hilbert space ${\mathcal H}_{d}^A\otimes{\mathcal H}_{\infty}^B$, using the standard finite-dimensional techniques, density matrices can no longer be employed. So, the use of CV methods can be attempted. However, in CV entanglement theory, the only conveniently representable states are the Gaussian ones. In the non-Gaussian CV regime, for instance, exact entanglement quantification is, in general, hard to achieve. It is therefore useful to ask whether hybrid entangled Gaussian states exist.
\begin{lem}\label{lem:DVnonG}
Any single-partite $d$-dimensional quantum state with finite $d$ and $d\geq2$ is non-Gaussian.
\end{lem}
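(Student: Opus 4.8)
The plan is to exploit the rigid structure of Gaussian states, which forces their rank to be either one or infinite, never a finite value in between. I would start from the normal-mode (Williamson) decomposition: every Gaussian state can be written as a Gaussian unitary $\hat U$---generated by displacements, phase rotations, and squeezers---acting on a product of single-mode thermal states, $\hat\rho_G=\hat U\big(\bigotimes_k\hat\rho_{\mathrm{th}}(\bar n_k)\big)\hat U^\dagger$. Because $\hat U$ is invertible it preserves rank, so the rank of $\hat\rho_G$ is the product of the ranks of the thermal factors.

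The decisive observation is the dichotomy this produces. A thermal state $\hat\rho_{\mathrm{th}}(\bar n)=(1+\bar n)^{-1}\sum_m\big(\bar n/(1+\bar n)\big)^m\ket{m}\bra{m}$ has rank one when $\bar n=0$, where it reduces to the vacuum, and full, infinite rank when $\bar n>0$. Hence $\hat\rho_G$ is either pure (all $\bar n_k=0$, a displaced squeezed vacuum) or of infinite rank (some $\bar n_k>0$); no Gaussian state has finite rank greater than one. A mixed state supported on a finite-dimensional space of dimension $d$ has rank at least two and at most $d$, both finite, and so cannot be Gaussian.

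It remains to close the pure-state case, and here I would argue directly in the Fock basis. A pure Gaussian state is a displaced squeezed vacuum, and expanding $\hat D(\alpha)\hat S(\xi)\ket{0}$ shows it has nonvanishing overlap with arbitrarily high Fock states as soon as $\alpha$ or $\xi$ is nonzero, so that its support is the entire Fock space. The only pure Gaussian state of finite support is thus the vacuum $\ket{0}$, which is one-dimensional and therefore excluded by the hypothesis $d\geq2$. Combining the two cases, any state genuinely living in a finite space of dimension $d\geq2$, pure or mixed, fails to be Gaussian.

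I expect the main obstacle to lie in justifying the rank dichotomy rather than in the counting that follows: one must lean on the Williamson decomposition (or, equivalently, show that a Gaussian characteristic function $\exp(\text{quadratic})$ cannot belong to a finite-rank operator of rank at least two) and confirm that the thermal reference states really do have full support. The separation of the harmless one-dimensional vacuum from every genuinely higher-dimensional state is the only point at which the bound $d\geq2$ is actually invoked.
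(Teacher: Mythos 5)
Your proposal is correct, but it reaches the lemma by a genuinely different route than the paper. The paper's proof is a direct computation: writing the state in the truncated Fock basis, its Wigner function becomes a Fourier transform of the Gaussian $\mathrm{e}^{-x^2-y^2}$ multiplied by a polynomial of \emph{finite} degree built from products of Hermite polynomials; since a nonconstant finite-degree polynomial can never produce a Gaussian, an induction on $d$ --- peeling off the top-degree term $x^{2(d-1)}$, whose coefficient is $\tilde{c}_{d-1}\sum_n p_n|c_{n,d-1}|^2$ --- forces $c_{nl}=0$ for all $l\geq1$, leaving the vacuum as the only Gaussian state supported in the truncated space. You instead invoke the Williamson normal-mode form of a Gaussian state and convert the question into one about rank: Gaussian states have rank one or infinite rank, so no mixed state of finite rank $\geq2$ can be Gaussian, and the pure case reduces to the Fock support of a displaced squeezed vacuum. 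What your route buys is conceptual clarity and extra generality in the mixed case: rank is basis-independent, so your argument covers states supported on \emph{any} finite-dimensional subspace, whereas the paper's computation (and its use in Theorem \ref{thm:hybridnonG}) is tied to the span of the first $d$ Fock states. What you pay is reliance on heavier standard machinery (the Williamson decomposition and positive definiteness of the covariance matrix), where the paper needs only the definition of the Wigner function and elementary properties of Hermite polynomials.

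Two steps in your sketch deserve tightening. First, ``its support is the entire Fock space'' is an overstatement --- a squeezed vacuum occupies only the even Fock levels --- though infinite-dimensional support is all you actually need. Second, the claim that $\hat{D}(\alpha)\hat{S}(\xi)\ket{0}$ overlaps arbitrarily high Fock states should be argued rather than asserted from ``expanding.'' The cleanest way: this state is annihilated by $\mu\hat{a}+\nu\hat{a}^\dagger+\beta$ with $|\mu|^2-|\nu|^2=1$. If it had a highest occupied Fock level $M$ and $\nu\neq0$ (i.e., $\xi\neq0$), the term $\nu\hat{a}^\dagger$ would create an uncancelled component on $\ket{M+1}$, a contradiction; if $\nu=0$, the state is a coherent state, which has finite Fock support only when it is the vacuum. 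With that step filled in, your proof is complete and establishes the same dichotomy as the paper's: the vacuum, being one-dimensional, is the unique Gaussian state with finite-dimensional support.
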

For the proof of Lemma \ref{lem:DVnonG}, see Appendix \ref{Apx:PrfLemma}.
\begin{thm}\label{thm:hybridnonG}
Any bipartite hybrid entangled or classically correlated state is non-Gaussian \footnote[1]{Questions concerning the quantum discord \cite{Ollivier} of (mixed) bipartite hybrid states, and the Gaussian or non-Gaussian nature of such states, are left for future research.}. 
\end{thm}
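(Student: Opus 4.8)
The plan is to reduce the claim about the bipartite state to the single-partite statement already established in Lemma~\ref{lem:DVnonG}. The key observation is that a Gaussian state is completely characterized by its first and second moments, and that these moments behave well under the operation of taking partial traces. Specifically, the reduced state obtained from a Gaussian state by tracing out one subsystem is again Gaussian, since the reduced covariance matrix is simply the corresponding diagonal block of the full covariance matrix and the reduced displacement vector is the corresponding subvector. I would state this standard fact explicitly (or cite it) as the workhorse of the argument.

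\medskip

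\noindent First I would assume, for contradiction, that a bipartite hybrid entangled or classically correlated state $\hat\rho^{AB}$ on $\mathcal{H}_d^A\otimes\mathcal{H}_\infty^B$ is Gaussian. I would then take the partial trace over subsystem $B$ to obtain the reduced state $\hat\rho^A=\mathrm{Tr}_B\,\hat\rho^{AB}$. By the marginal property of Gaussian states just recalled, $\hat\rho^A$ is itself a Gaussian state. However, $\hat\rho^A$ is a state supported on the finite-dimensional Hilbert space $\mathcal{H}_d^A$ with $d\geq 2$, so by Lemma~\ref{lem:DVnonG} it must be non-Gaussian. This contradiction shows that $\hat\rho^{AB}$ cannot have been Gaussian in the first place.

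\medskip

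\noindent The main thing to verify carefully is that the class of states in question genuinely has the finite-dimensional $A$-marginal required to invoke the Lemma. For any state of the hybrid-entangled form in Eq.~\eqref{eq:HEdef}, as well as for any classically correlated (separable with respect to the $A\!:\!B$ cut) state, the $A$-party lives in $\mathcal{H}_d^A$ by construction, so $\hat\rho^A$ is automatically a $d$-dimensional state with $d\geq 2$; this is immediate and requires no computation. The one subtlety worth a sentence is that the marginal property of Gaussian states under partial trace is usually phrased for finitely many modes, so I would note that here subsystem $A$ should be viewed through its (infinite-dimensional) embedding into an oscillator mode when speaking of moments, and that a state confined to a finite-dimensional subspace of that mode cannot be Gaussian—precisely the content of Lemma~\ref{lem:DVnonG}.

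\medskip

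\noindent I do not anticipate a serious obstacle, since the result is essentially a corollary of Lemma~\ref{lem:DVnonG} combined with the closure of Gaussianity under marginals. The only point demanding care is logical precision: one must observe that \emph{hybrid entangled} and \emph{classically correlated} states both share the defining feature that subsystem $A$ is genuinely $d$-dimensional, so that the finite-dimensional marginal argument applies uniformly to both cases covered by the theorem statement.
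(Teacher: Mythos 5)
Your overall route is the same as the paper's: trace out the CV subsystem and invoke Lemma \ref{lem:DVnonG} on the finite-dimensional marginal. But there is a genuine gap at the point where you invoke the Lemma. You write that $\hat\rho^A$ is ``automatically a $d$-dimensional state with $d\geq 2$'' simply because subsystem $A$ lives in $\mathcal{H}_d^A$, and that this ``requires no computation.'' This conflates the dimension of the ambient Hilbert space with the effective dimension of the state. Read that way, Lemma \ref{lem:DVnonG} would be false: the vacuum $\ket{0}\bra{0}$ also lives in $\mathcal{H}_d^A$ (in the Fock-basis embedding the paper uses), and it is Gaussian. What the proof of the Lemma (Appendix \ref{Apx:PrfLemma}) actually establishes is that a state confined to a finite-dimensional Fock subspace is Gaussian if and only if it equals $\ket{0}\bra{0}$. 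So after tracing out $B$, Gaussianity of $\hat\rho^{AB}$ does not immediately contradict the Lemma; it only forces $\hat\rho^A=\ket{0}\bra{0}$, which is a pure state.

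The missing step---and the one place where the hypothesis of the theorem must enter---is to rule out this pure-marginal possibility: for any entangled or even classically correlated bipartite state, the reduced state is necessarily mixed, hence cannot be $\ket{0}\bra{0}$, hence is non-Gaussian by the Lemma; only then does the contradiction follow. This is exactly how the paper closes its argument. As written, your proof never uses entanglement or classical correlation in any essential way, and it would equally ``prove'' that the manifestly Gaussian product state $\ket{0}^A\bra{0}\otimes\ket{0}^B\bra{0}$ is non-Gaussian---which shows the step as stated cannot be right. The fix is one sentence (mixedness of the marginals of correlated states), but it is the crux of the theorem, not a formality.
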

\begin{proof}
If a multipartite quantum state is Gaussian, all its subsystems will be Gaussian. So, we consider a state of the form \eqref{eq:HEdef} and trace out the CV subsystem. What is left is a $d$-dimensional single-partite system with finite $d$, which can be described in its Fock basis. It is denoted by $\hat{\rho}$. Due to Lemma \ref{lem:DVnonG}, $\hat{\rho}$ is Gaussian if and only if $\hat{\rho}=\ket{0}\bra{0}$, which is a pure state. However, for any entangled or even classically correlated state, the reduced state cannot be pure \cite{Audretsch}. Hence, for Gaussian $\hat{\rho}$, the overall system cannot be entangled or classically correlated. Therefore, every bipartite hybrid entangled or classically correlated state is non-Gaussian.
\end{proof}

We have found that there are no hybrid entangled Gaussian states. Hence, CV Gaussian tools are inappropriate for the description of hybrid entangled states.

The proof of Theorem \ref{thm:hybridnonG} basically relies on Lemma \ref{lem:DVnonG}, which states that any finite-dimensional DV state with dimension $\geq2$ is non-Gaussian. It is straightforwardly generalized to multipartite systems. Arguing that for Gaussianity \textit{all} subsystems have to be Gaussian, it is sufficient for the non-Gaussianity of any multipartite system which also possesses DV constituent(s) that at least one DV subsystem is of dimension $\geq2$. However, if there shall be entanglement between the DV subsystem and the rest, the state will necessarily be non-Gaussian, since for entanglement dimension, $\geq2$ is required. So, any multipartite quantum state which involves entangled DV subsystems is non-Gaussian. Even more generally, only 1- or infinite-dimensional systems can be Gaussian. This is, of course, not surprising; however, it is worth pointing out that Theorem \ref{thm:hybridnonG} is not trivial. Since the \textit{overall} Hilbert space of hybrid entangled systems is indeed infinite-dimensional, it is not \textit{a priori} clear that Gaussian hybrid entangled states do not exist.

We have shown that hybrid entangled states belong to the non-Gaussian, infinite-dimensional Hilbert-space regime, which is not easy to deal with, as we already know from conventional CV entanglement theory. The states can be neither described by proper density matrices nor by covariance matrices. Phase-space representations are also not so convenient, since one of the subsystems is DV. The only known quasi-probability-distribution which may in some cases make direct statements about the separability properties of the state is the Glauber-Sudarshan \textit{P} representation \cite{WangX,Duan}. However, it can be easily shown that this function is totally irregular for these highly nonclassical hybrid entangled states. It may still be possible to construct entanglement witnesses, but entanglement quantification appears, in general, hard. Actually, there is a way out of this dilemma: For some hybrid entangled states, the unique Hilbert-space structure can be exploited in such a way that the states can nevertheless be described by density matrices. These states are effectively finite-dimensional and are therefore called DV-like hybrid entangled. This gives rise to a classification scheme of hybrid entangled states. However, for a rigorous analysis we first need to introduce a slightly modified version of the so-called Gram-Schmidt process.

\subsection{Inverse Gram-Schmidt process}\label{subsec:2.b}
Consider again a general hybrid entangled state of the form \eqref{eq:HEdef}. Depending on the number of mix terms $N$ and the dimension of the DV subsystem $d$, there can be maximally $N\times d$ linearly independent CV qumode states $\ket{\psi_{nm}}^B$ in $\hat{\rho}^{AB}$. The dimension $d$ is always finite due to the definition of hybrid entanglement. However, $N$ may be either finite or infinite and, hence, the number of linearly independent qumode states is either finite or infinite. Furthermore, if $N=1$, the state is pure.

If the number $N\times d$ of linearly independent CV qumode states is finite, they only span an $(N\times d)$-dimensional subspace ${\mathcal H}_{N\times d}$ of the initially infinite-dimensional Hilbert space ${\mathcal H}_\infty$. Then, the \textit{Gram-Schmidt process} can be employed to express the qumode states in an orthonormal basis of this finite-dimensional subspace. In this case, the state becomes effectively DV and all the methods from DV entanglement theory can be applied. 

The Gram-Schmidt procedure is a method for orthonormalizing a finite, linearly independent set of vectors in an inner product space \cite{Nielsen}. For a linearly independent set of vectors $\{\ket{\psi_i}:i=1,\ldots,n\}$ (since the process is to be exploited in the framework of Hilbert spaces, the inner product space is \textit{a priori} assumed to be a Hilbert space), a set of pairwise orthonormal vectors $\{\ket{e_i}:i=1,\ldots,n\}$ spanning the same subspace as $\{\ket{\psi_i}:i=1,\ldots,n\}$ is given by
\begin{equation}\label{GSprocess}
\begin{aligned}
\ket{e_1'} &= \ket{\psi_1}\,, & \ket{e_1} &= \frac{\ket{e_1'}}{\sqrt{\braket{e_1'|e_1'}}}\,, \\
\ket{e_2'} &= \ket{\psi_2}-\braket{e_1|\psi_2}\ket{e_1}\,, & \ket{e_2} &= \frac{\ket{e_2'}}{\sqrt{\braket{e_2'|e_2'}}}\,, \\
& \vdots & & \vdots \\
\ket{e_n'} &=\ket{\psi_n}-\sum_{i=1}^{n-1}\braket{e_i|\psi_n}\ket{e_i}\,,\qquad & \ket{e_n} &=\frac{\ket{e_n'}}{\sqrt{\braket{e_n'|e_n'}}}\,.
\end{aligned}
\end{equation}
Making use of this, any finite set of linearly independent qumode states can be expressed in an orthonormal basis $\{\ket{e_i}:i=1,\ldots,n\}$. However, more specifically, Eqs. \eqref{GSprocess} only determine how to express the new orthonormal basis in terms of the old nonorthonormal one. What is actually required is the inverse expression. To express the qumode states in terms of $\{\ket{e_i}:i=1,\ldots,n\}$ we try the following approach, which can be considered an \textit{inverse Gram-Schmidt process}, i.e., a modified version of the original Gram-Schmidt process.
\begin{thm}\label{thm:GSI}
$n$ normalized, generally nonorthogonal, linearly independent states $\{\ket{\psi_i}:i=1,\ldots,n \,;\,0\leq|\braket{\psi_i|\psi_j}|\leq1\,\forall\,i,j\}$ can always be expressed as $\ket{\psi_i}=\sum_{j=1}^ia_{ij}\ket{e_j}$, where $\{\ket{e_i}:i=1,\ldots,n\}$ forms an orthonormal basis of the space spanned by $\{\ket{\psi_i}\}$, and $a_{ij}\in\mathbb C$.
\end{thm}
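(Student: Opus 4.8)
The plan is to prove the claim by directly rearranging the Gram-Schmidt recursion \eqref{GSprocess}, solving each relation for $\ket{\psi_i}$ in terms of the orthonormal vectors $\ket{e_j}$ rather than the other way around. The triangular index bound $j\leq i$ in the asserted expansion is not an extra condition to be engineered; it emerges automatically from this rearrangement, so no separate argument is required for it.

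First I would dispose of the base case $i=1$. Because $\ket{\psi_1}$ is normalized, $\braket{e_1'|e_1'}=\braket{\psi_1|\psi_1}=1$, and the first line of \eqref{GSprocess} collapses to $\ket{e_1}=\ket{\psi_1}$. Hence $\ket{\psi_1}=a_{11}\ket{e_1}$ with $a_{11}=1$, which already has the required form. For general $i$, I would start from the $i$-th relation $\ket{e_i'}=\ket{\psi_i}-\sum_{k=1}^{i-1}\braket{e_k|\psi_i}\ket{e_k}$ and simply solve it for $\ket{\psi_i}$, giving $\ket{\psi_i}=\ket{e_i'}+\sum_{k=1}^{i-1}\braket{e_k|\psi_i}\ket{e_k}$. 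Substituting $\ket{e_i'}=\sqrt{\braket{e_i'|e_i'}}\,\ket{e_i}$ then yields $\ket{\psi_i}=\sqrt{\braket{e_i'|e_i'}}\,\ket{e_i}+\sum_{j=1}^{i-1}\braket{e_j|\psi_i}\ket{e_j}$, which is exactly $\ket{\psi_i}=\sum_{j=1}^i a_{ij}\ket{e_j}$ upon reading off $a_{ii}=\sqrt{\braket{e_i'|e_i'}}$ and $a_{ij}=\braket{e_j|\psi_i}$ for $j<i$. All these coefficients are manifestly complex numbers, as required.

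The only point demanding care, and hence the main (minor) obstacle, is to verify that the construction is well-defined: that the normalization $\sqrt{\braket{e_i'|e_i'}}$ is never applied to a zero vector and that each $a_{ii}\neq 0$. This is precisely where the linear independence hypothesis is used. If some $\ket{e_i'}$ vanished, then $\ket{\psi_i}$ would lie in the span of $\ket{\psi_1},\ldots,\ket{\psi_{i-1}}$, contradicting linear independence. Therefore $\braket{e_i'|e_i'}>0$ for every $i$, each diagonal coefficient $a_{ii}$ is a strictly positive real, and the expansion is valid throughout.

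That $\{\ket{e_i}:i=1,\ldots,n\}$ forms an orthonormal basis of $\mathrm{span}\{\ket{\psi_i}\}$ is the standard output of the Gram-Schmidt construction and need not be reproved here. An equivalent, more structural way to phrase the whole argument is to observe that \eqref{GSprocess} expresses $(\ket{e_i})$ as a lower-triangular transformation of $(\ket{\psi_j})$ with nonzero diagonal entries; its inverse is again lower triangular, and this inverse is exactly the coefficient array $(a_{ij})$ asserted by the theorem.
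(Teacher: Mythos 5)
Your proof is correct, but it takes a genuinely different route from the paper's. You obtain the triangular expansion by directly inverting the Gram--Schmidt recursion \eqref{GSprocess}: solving the $i$-th relation for $\ket{\psi_i}$ immediately gives $\ket{\psi_i}=\sqrt{\braket{e_i'|e_i'}}\,\ket{e_i}+\sum_{j=1}^{i-1}\braket{e_j|\psi_i}\ket{e_j}$, so the coefficients are read off explicitly as $a_{ij}=\braket{e_j|\psi_i}$ and $a_{ii}=\sqrt{\braket{e_i'|e_i'}}>0$, with linear independence guaranteeing that no $\ket{e_i'}$ vanishes; triangularity is automatic and no induction beyond Gram--Schmidt itself is needed. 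The paper instead posits the triangular ansatz abstractly, first checks that the transformation matrix $T$ is invertible (nonzero diagonal), and then proves by induction on $n$ that the $\frac{n(n+1)}{2}$ coefficients can be chosen so as to reproduce every overlap $\braket{\psi_i|\psi_j}$, solving a triangular linear system at each inductive step --- in effect a Cholesky-type factorization of the Gram matrix. What your argument buys is brevity and transparency: the statement is exposed as a one-line rearrangement of the standard construction. What the paper's argument buys is a recipe phrased \emph{entirely} in terms of the overlaps $\braket{\psi_i|\psi_j}$, never requiring the orthonormal vectors $\ket{e_j}$ in explicit form; this is what makes it directly implementable when only the Gram matrix is known analytically (as for coherent states), and it is exactly how Eq. \eqref{eqGSIinv} and the examples of Sec. \ref{sec:3} are computed. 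The two constructions of course yield the same coefficients --- your $\braket{e_j|\psi_i}$, unwound recursively, reproduces the paper's successive solution of \eqref{GSIaug} --- so the difference is one of emphasis (structural inversion versus overlap-level algorithm), not of substance.
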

\begin{proof}
Writing out $\ket{\psi_i}=\sum_{j=1}^ia_{ij}\ket{e_j}$, Theorem \ref{thm:GSI} states that the $\{\ket{\psi_i}\}$ can be always written as
\begin{equation}\label{GSprocessInverse}
\begin{aligned}
\ket{\psi_1} &= a_{11}\ket{e_1}\,, \\
\ket{\psi_2} &= a_{21}\ket{e_1}+a_{22}\ket{e_2}\,, \\
\ket{\psi_3} &= a_{31}\ket{e_1}+a_{32}\ket{e_2}+a_{33}\ket{e_3}\,, \\
& \vdots  \\
\ket{\psi_n} &= a_{n1}\ket{e_1}+\ldots+a_{nn}\ket{e_n}=\sum_{i=1}^{n}a_{ni}\ket{e_i}\,.
\end{aligned}
\end{equation}
For the proof, it has to be shown that (1) $\ket{\psi_i}=\sum_{j=1}^ia_{ij}\ket{e_j}$ corresponds to a valid basis transformation and (2) it actually performs the right mapping.

(1) Write the transformation as
\begin{equation}
\ket{\psi_i}=\sum_j T_{ij}\ket{e_j},
\end{equation}
with the transformation matrix 
\begin{equation}
T=\begin{pmatrix} a_{11} & 0 & \cdots & 0 \\ 
a_{21} & a_{22} & & \vdots \\ 
\vdots & & \ddots & \vdots \\ 
a_{n1} & a_{n2} & \cdots & a_{nn}
\end{pmatrix}.
\end{equation}
Due to the normalization of the initial and the new vectors $\sum_{j=1}^i |a_{ij}|^2=1$ is known, and due to the linear independence of the $\{\ket{\psi_i}\}$ also $a_{ii}\neq0\,\forall\,i$ is necessary.
\begin{itemize}
\item[$\Rightarrow$] $\det[T] =\prod_{i=1}^n a_{ii}\neq0$. 
\item[$\Rightarrow$] $T$ is invertible.
\item[$\Rightarrow$] $T$ is a valid basis transformation.
\end{itemize}

(2) To show that the lower triangular structure of the basis transformation $T$ in combination with the orthonormal basis $\{\ket{e_i}\}$ is sufficient to actually express the $\{\ket{\psi_i}\}$ accurately in terms of $\{\ket{e_i}\}$, it is demonstrated that the $\frac{n(n+1)}{2}$ parameters $a_{ij}$ can be chosen such that all overlaps $\braket{\psi_i|\psi_j}$ are preserved when the transformation is applied.

On the one hand, there are $n^2$ such overlaps in total and $\frac{n(n+1)}{2}$ ones with potentially differing absolute values. On the other hand, there are $\frac{n(n+1)}{2}$ complex parameters $a_{ij}$. From the structure of the basis transformation and the fact that $a_{ij}$ are complex, it is clear that if $a_{ij}$ can be chosen such that the $\frac{n(n+1)}{2}$-element set of $\{\braket{\psi_i|\psi_j}:i\leq j\}$ can be preserved, also the rest of the overlaps are preserved, since they are only complex conjugates of the former. Hence, it already becomes reasonable that the $a_{ij}$ can be chosen appropriately.

However, a proper proof is performed by induction in $n$:

\textit{Inductive Basis}: $n=1$. There is only one overlap to be preserved: 
\begin{equation}
\braket{\psi_1|\psi_1}=1\stackrel{!}{=}\braket{e_1|a_{11}^\ast a_{11}|e_1}=|a_{11}|^2.
\end{equation}
Hence, choose $a_{11}=1$, which preserves the overlap $\braket{\psi_1|\psi_1}$.

\textit{Inductive Step}: Assume $a_{ij}$ have been calculated for $i\leq n-1$ such that all overlaps $\{\braket{\psi_i|\psi_j}:i,j\leq n-1\}$ are preserved. We show that then also all $a_{nj}$ can be chosen such that the overlaps $\{\braket{\psi_i|\psi_n}:i=1,\ldots,n\}$ are preserved. The complex conjugated overlaps follow automatically as argued before.

Applying the basis transformation, the overlaps $\{\braket{\psi_i|\psi_n}:i=1,\ldots,n-1\}$ are
\begin{equation}
\begin{aligned}
\braket{\psi_1|\psi_n} &= a_{11}^\ast a_{n1}\,, \\
\braket{\psi_2|\psi_n} &= a_{21}^\ast a_{n1}+a_{22}^\ast a_{n2}\,, \\
& \vdots  \\
\braket{\psi_{n-1}|\psi_n} &= a_{n-1,1}^\ast a_{n1} + \ldots + a_{n-1,n-1}^\ast a_{n,n-1}  \,.
\end{aligned}
\end{equation}
As $a_{ij}$ have been calculated for $i\leq n-1$ due to the inductive hypothesis, this is just a system of linear equations, which can be written as an augmented matrix (using $a_{11}^\ast=1$):
\begin{equation}\label{GSIaug}
\begin{pmatrix}[c|cccc]
\braket{\psi_1|\psi_n}  & 1 & 0 & \hdots & 0 \\
\braket{\psi_2|\psi_n}  & a_{21}^\ast & a_{22}^\ast & & \vdots \\
 & \vdots & & \ddots & \vdots  \\
\braket{\psi_{n-1}|\psi_n}  & a_{n-1,1}^\ast & a_{n-1,2}^\ast & \hdots & a_{n-1,n-1}^\ast \end{pmatrix}.
\end{equation}
Since $a_{ii}\neq0\,\forall\,i$, this system of equations is exactly solvable. Hence, $\{a_{nj}:j=1,\ldots,n-1\}$ can be chosen such that $\{\braket{\psi_i|\psi_n}:i=1,\ldots,n-1\}$ are preserved. Therefore, there are only one free parameter $a_{nn}$ and one overlap $\braket{\psi_n|\psi_n}$ to be preserved left:
\begin{equation}
\braket{\psi_n|\psi_n}=1\stackrel{!}{=}\sum_{j=1}^n |a_{nj}|^2=\sum_{j=1}^{n-1} |a_{nj}|^2 + |a_{nn}|^2.
\end{equation}
From $\sum_{j=1}^n |a_{nj}|^2=1$ and $a_{nn}\neq 0$, which is already known, $\sum_{j=1}^{n-1} |a_{nj}|^2<1$ follows, and hence $a_{nn}$ can be chosen as  
\begin{equation}
a_{nn}=\sqrt{1-\sum_{j=1}^{n-1} |a_{nj}|^2}.
\end{equation}
In the end, also $\braket{\psi_n|\psi_n}$ can be preserved. Therefore, the theorem is valid for $n$ under the assumption of validity for $n-1$.  
As a conclusion, with the inductive basis it is valid for all $n$.
\end{proof}

The proof has been presented in such a great detail, because it is constructive and hence also sets out how to actually compute $\{a_{ij}\}$ for a given set of qumode states. Recalling Eqs. \eqref{GSprocessInverse} and \eqref{GSIaug}, $a_{ij}$ can be calculated successively one after another by considering successive overlaps. A parameter $a_{i1}$ is directly obtained from the overlap $\braket{\psi_1|\psi_i}$. Then, $a_{i2}$ follows from $\braket{\psi_2|\psi_i}$ together with the known $a_{i1}$. Likewise, $a_{i3}$ is calculated from $\braket{\psi_3|\psi_i}$, $a_{i1}$ and $a_{i2}$. For the other parameters just go on like this. Hence, the inverse Gram-Schmidt process can be efficiently implemented and computed. 

As an example consider the normalized qumode states $\{\ket{\psi_i}:i=1,2,3\}$ with overlaps
\begin{equation}
\begin{aligned}
\braket{\psi_1|\psi_2} &= c_1, \\
\braket{\psi_1|\psi_3} &= c_2, \\
\braket{\psi_2|\psi_3} &= c_3.
\end{aligned}
\end{equation} 
They can be expressed in an orthonormal basis $\{\ket{e_i}:i=1,2,3\,,\,\braket{e_i|e_j}=\delta_{ij}\}$ as
\begin{equation}\label{eqGSIinv}
\begin{aligned}
\ket{\psi_1} =& \ket{e_1}, \\
\ket{\psi_2} =& c_1\ket{e_1}+\sqrt{1-|c_1|^2}\ket{e_2}, \\
\ket{\psi_3} =& c_2\ket{e_1}+\frac{c_3-c_1^\ast c_2}{\sqrt{1-|c_1|^2}}\ket{e_2} \\
&+\sqrt{1-|c_2|^2-\frac{|c_3-c_1^\ast c_2|^2}{1-|c_1|^2}}\ket{e_3}.
\end{aligned}
\end{equation}

\subsection{Classification scheme}\label{subsec:2.c}
So, once again, consider a state of the form \eqref{eq:HEdef}. For $N=1$ the state is pure. Then it contains only $d$ linearly independent qumode states, which span a $d$-dimensional subspace ${\mathcal H}_d^B$. Therefore, with aid of the inverse Gram-Schmidt process these qumode states can be expressed in an orthonormal basis. Then, density matrices can be employed for the description of the overall state, and also a Schmidt decomposition can be performed or pure-state measures such as the entropy of entanglement can be calculated \cite{Audretsch,Bennett}. The state is then effectively DV. For $1<N<\infty$, the state is mixed. Nevertheless, it possesses a finite number of $N\times d$ qumode states, which can be again cast in an orthonormal basis. Therefore, also states of this kind are effectively DV and the density matrix formalism can be exploited. However, such states are no longer pure and neither pure-state measures nor a Schmidt decomposition can be applied. Finally, there is the case $N=\infty$. Here, $N=\infty$ refers to those states which can be expressed with infinite $N$ only. 
These states hold an infinite number of qumode states, which has the effect that the Gram-Schmidt process cannot be applied anymore. Hence, they are not describable by density matrices and therefore no longer effectively DV. Nevertheless, one subsystem does remain DV. These are the states which we call \textit{truly hybrid entangled}.

Summing up, for bipartite hybrid entangled states in a pure-state decomposition with $N$ denoting the number of mix terms in the convex combination of pure-state projectors, there is the following classification scheme:
\begin{itemize}
\item $\boldsymbol{N=1:}$ Pure hybrid entangled states.
\begin{itemize}
\item Supported by a finite-dimensional subspace. 
\item $\,\Rightarrow\,$ \textbf{DV-like entanglement}.
\item Schmidt decomposition applicable.
\item DV pure-state measures applicable.
\end{itemize}
\item $\boldsymbol{1<N<\infty:}$ Mixed hybrid entangled states.
\begin{itemize}
\item Supported by a finite-dimensional subspace. 
\item $\,\Rightarrow\,$ \textbf{DV-like entanglement}.
\item DV mixed-state measures applicable.
\end{itemize}
\item $\boldsymbol{N=\infty:}$ Mixed hybrid entangled states.
\begin{itemize}
\item No support by a finite-dimensional subspace.
\item $\,\Rightarrow\,$ \textbf{True hybrid entanglement}.
\item In general no exact measures directly applicable.
\item CV entanglement witnesses adaptable.
\end{itemize}
\end{itemize}
It should be pointed out that the possibility of applying DV methods on such a wide class of hybrid entangled states is quite remarkable. An initially non-Gaussian infinite-dimensional quantum state, which seems rather awkward at first sight, can finally be conveniently described in terms of density matrices, and in the pure-state case even a Schmidt decomposition can be performed. Nevertheless, there is also the class of states which stay \textit{truly hybrid entangled} and cannot be transformed using the Gram-Schmidt process. As mentioned earlier, in this infinite-dimensional, non-Gaussian regime exact entanglement quantification appears to be hard. However, entanglement witnesses for detecting true hybrid entanglement can be adapted from CV entanglement theory.

\section{Examples and Applications}\label{sec:3}
Now, we shall present an example for each class of hybrid entanglement. We shall also point out when these examples correspond to manifestations of entangled states encountered in quantum-information protocols and applications.

\subsection{Pure qutrit-qumode entanglement}\label{subsec:3a}
As an example of pure bipartite hybrid entanglement an entangled state of a qutrit and a qumode system is considered:
\begin{equation}\label{eq:pQtQmS}
\ket{\psi}^{AB} =\frac{1}{\sqrt{3}}\Bigl(\ket{e_0}^A\ket{vac}^B+\ket{e_1}^A\ket{\alpha}^B+\ket{e_2}^A\ket{-\alpha}^B\Bigr).
\end{equation}

(1) An inverse Gram-Schmidt process with respect to subsystem B yields
\begin{align}
\ket{vac}^B  =& \;\;\,\ket{e_0}^B,  \\
\ket{\alpha}^B  =& x\ket{e_0}^B + \quad\sqrt{1-x^2}\ket{e_1}^B, \\
\ket{-\alpha}^B  =& x\ket{e_0}^B - x^2\sqrt{1-x^2}\ket{e_1}^B \notag \\
&+ \sqrt{1-x^2-x^4+x^6}\ket{e_2}^B,
\end{align}
with $x=e^{-\frac{1}{2}|\alpha|^2}$. This corresponds to Eq. \eqref{eqGSIinv}. Hence
\begin{equation}
\begin{aligned}
\ket{\psi}^{AB} =& \frac{1}{\sqrt{3}}\Bigl(\ket{e_0}^A\ket{e_0}^B+x\ket{e_1}^A\ket{e_0}^B+\sqrt{1-x^2}\ket{e_1}^A\ket{e_1}^B \\ 
&+x\ket{e_2}^A\ket{e_0}^B-x^2\sqrt{1-x^2}\ket{e_2}^A\ket{e_1}^B \\ 
&+\sqrt{1-x^2-x^4+x^6}\ket{e_2}^A\ket{e_2}^B\Bigr),
\end{aligned}
\end{equation}
which is the effective DV form of the state.

(2) The pure-state entropy of entanglement of the state can be calculated, which is shown in Fig. \ref{fig:pQtQmS}. For $\alpha\rightarrow\infty$ the state becomes maximally entangled ($\alpha$ real). 
\begin{figure}[ht]
\begin{center}
\includegraphics[width=8.5cm]{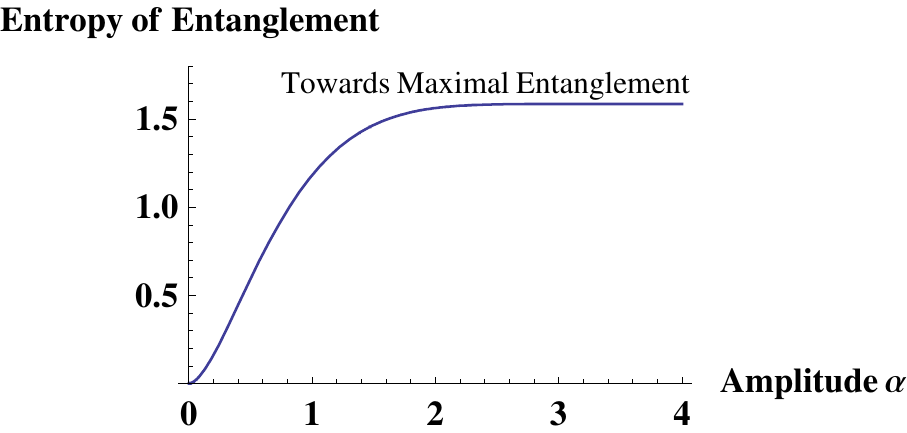}
\caption[Entropy of entanglement of the state $\ket{\psi}^{AB}$, \eqref{eq:pQtQmS}.]{(Color online) Entropy of entanglement of the state $\ket{\psi}^{AB}$ of Eq. \eqref{eq:pQtQmS}. For $\alpha\rightarrow\infty$ the state becomes maximally entangled. Note that the entropy of entanglement has been calculated here in qubit entanglement units (ebits: $log\equiv log_2$).}
\label{fig:pQtQmS}
\end{center}
\end{figure}

(3) Finally, set $x=\frac{1}{2}$, which corresponds to $\alpha=\sqrt{2\ln2}\approx1.18$, and calculate the Schmidt decomposition:
\begin{equation}
\begin{aligned}
\ket{\psi}^{AB}= &0.76\ket{e_0'}^A\ket{e_0'}^B+0.56\ket{e_1'}^A\ket{e_1'}^B \\
& +0.33\ket{e_2'}^A\ket{e_2'}^B.
\end{aligned}
\end{equation}
We can also think of other examples for pure bipartite hybrid entangled states. States of the form 
\begin{equation}
\ket{\psi}^{AB} =\frac{1}{\sqrt{2}}\Bigl(\ket{e_0}^A\ket{\psi_0}^B+\ket{e_1}^A\ket{\psi_1}^B\Bigr)
\end{equation}
are pure bipartite hybrid entangled and relevant for cat-state engineering \cite{Savage,PvLx}, hybrid quantum communication via qubus approaches \cite{qubus1,qubus3}, and some quantum key distribution schemes \cite{Lorenz,Rigas} [later, in Sec. \ref{subsec:3c}, we obtain such a state in Eq. \eqref{eq:noisyQbQm1norm} as a special case of Eq. \eqref{eq:noisyQbQm3} with $\braket{n_{th}}=0$ and $\eta=1$].

\subsection{Mixed qubit-qumode entanglement}\label{subsec:3b}
Now, we present an example for the class of mixed, but effectively DV hybrid entangled states. Consider the state
\begin{equation}\label{eq:2x3Mstate1}
\begin{aligned}
\hat{\rho}^{AB} &= p\,\ket{\phi_+}^{AB}\bra{\phi_+}+(1-p)\,\ket{\phi_-}^{AB}\bra{\phi_-}, \\ 
\ket{\phi_\pm}^{AB} & = \frac{1}{\sqrt{2}}\Bigl(\ket{e_0}^A\ket{vac}^B+\ket{e_1}^A\ket{\pm\alpha}^B\Bigr),
\end{aligned}
\end{equation}
which contains three qumode states $\ket{vac}^B$ and $\ket{\pm\alpha}^B$ with $\alpha\in{\mathbb R}$ (see Fig. \ref{fig:3x2state} for a visualization).
\begin{figure}[ht]
\begin{center}
\includegraphics[width=8.5cm]{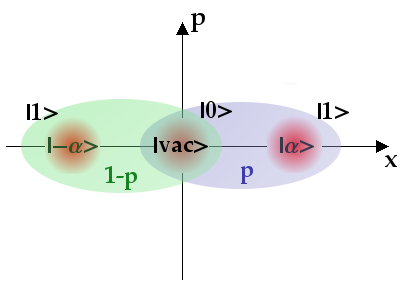}
\caption[Visualization of the hybrid entangled state $\hat{\rho}^{AB}$, see \eqref{eq:2x3Mstate1}, in phase space.]{(Color online) Visualization of the hybrid entangled state $\hat{\rho}^{AB}$ of Eq. \eqref{eq:2x3Mstate1} in phase space. The blue (right) region corresponds to the first pure state in the convex combination with probability $p$, while the green (left) region represents the pure state obtained with probability $1-p$. The additional $\ket{0}$ and $\ket{1}$ vectors denote the qubit states associated with the qumode states $\ket{vac}$ and $\ket{\pm\alpha}$.}
\label{fig:3x2state}
\end{center}
\end{figure}

We perform a Gram-Schmidt process and obtain a qubit-qutrit entangled state in ${\mathcal H}_2^A\otimes{\mathcal H}_3^B$. The pure states in the convex combination of \eqref{eq:2x3Mstate1}, after the inverse Gram-Schmidt process in an orthonormal basis, look like
\begin{align}
\ket{\phi_+}^{AB} =&  \frac{1}{\sqrt{2}}\Bigl(\ket{e_0}^A\ket{e_0}^B+x\ket{e_1}^A\ket{e_0}^B+\sqrt{1-x^2}\ket{e_1}^A\ket{e_1}^B\Bigr),\\
\ket{\phi_-}^{AB} =&  \frac{1}{\sqrt{2}}\Bigl(\ket{e_0}^A\ket{e_0}^B+x\ket{e_1}^A\ket{e_0}^B-x^2\sqrt{1-x^2}\ket{e_1}^A\ket{e_1}^B \notag \\ 
&+\sqrt{1-x^2-x^4+x^6}\ket{e_1}^A\ket{e_2}^B\Bigr),
\end{align}
with $x=\mathrm{e}^{-\frac{1}{2}|\alpha|^2}$. Entanglement quantification can be performed with the logarithmic negativity $E_N$ (see Fig. \ref{fig:3x2LogNeg}) \cite{Vidal,Eisert2}. For general bipartite mixed states with higher dimension than $2\times 2$, it is basically the only known calculable entanglement monotone, and it is also the most important one for general mixed, effectively DV hybrid entangled states.
\begin{figure}[ht]
\begin{center}
\includegraphics[width=8.5cm]{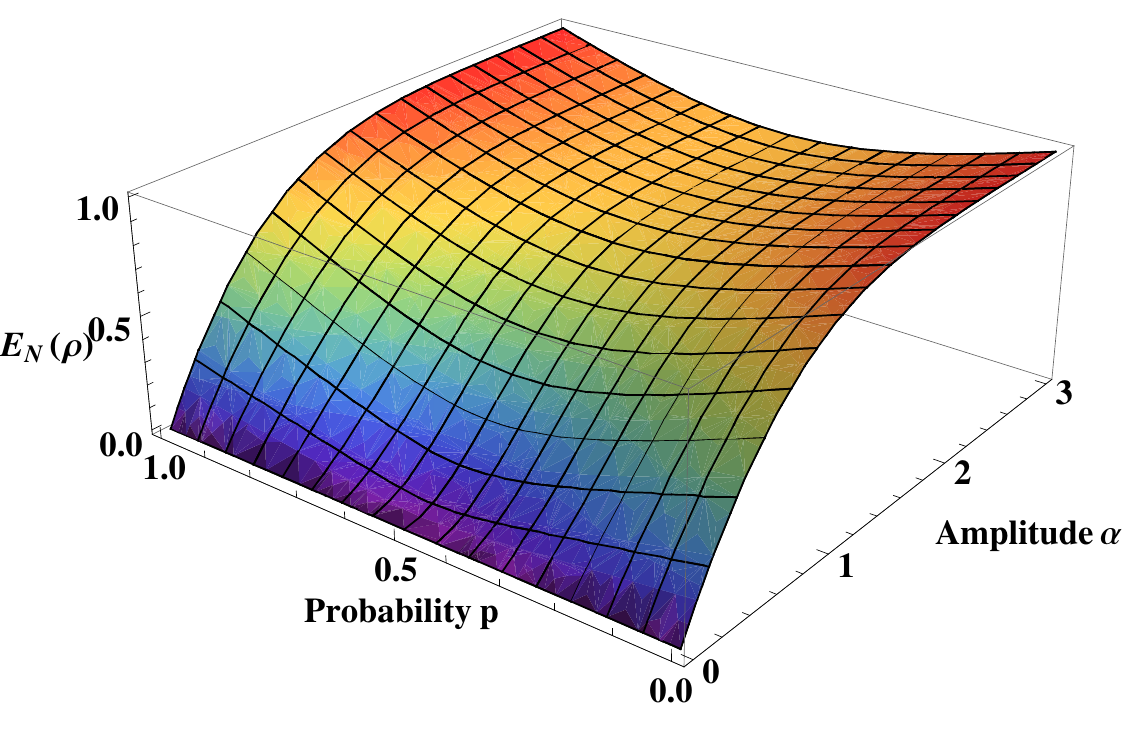}
\caption[Logarithmic negativity of the effective qubit-qutrit hybrid entangled state $\hat{\rho}^{AB}$, Eq. \eqref{eq:2x3Mstate1}, as a function of probability $p$ and amplitude $\alpha$.]{(Color online) Logarithmic negativity of the effective qubit-qutrit hybrid entangled state $\hat{\rho}^{AB}$, Eq. \eqref{eq:2x3Mstate1}, as a function of probability $p$ and amplitude $\alpha$. For any given $\alpha$ maximal mixing $p=\frac{1}{2}$ yields the smallest entanglement. Furthermore, higher $\alpha$ also results in greater entanglement. For $p=0\lor1$ and $\alpha\rightarrow\infty$ the state becomes maximally entangled ($\alpha \in \mathbb R$).}
\label{fig:3x2LogNeg}
\end{center}
\end{figure}
It can be seen that for any given $\alpha$ the smallest entanglement is obtained for maximal mixing $p=\frac{1}{2}$. Furthermore, the higher $\alpha$, the more entangled the state is. Finally, for $p=0\lor1$ and $\alpha\rightarrow\infty$ the state approaches maximal entanglement.

We may also consider other mixed bipartite hybrid entangled states which are even effectively DV-like qubit-qubit entangled. Then the concurrence \cite{Wootters} is a better choice for entanglement quantification. In Sec. \ref{subsec:3c}, such a state is given in Eq. \eqref{Eq0thermCase} as a special case of Eq. \eqref{eq:noisyQbQm3} for $\braket{n_{th}}=0$. Its concurrence is plotted in Fig. \ref{fig:CADampCh}.

\subsection{True hybrid entanglement}\label{subsec:3c}
Finally, the third class of bipartite hybrid entangled states is considered, the \textit{truly hybrid entangled} states. Recall their definition corresponding to Eq. \eqref{eq:HEdef} with $N=\infty$:
\begin{equation}\label{eq:THEdef}
\begin{aligned}
\hat{\rho}^{AB} &=\sum_{n=1}^\infty p_n\,\ket{\psi_n}_{AB}\bra{\psi_n}\,,\qquad p_n>0\;\forall \;n\,,\;\;\;\sum_{n=1}^\infty \;p_n=1, \\
\ket{\psi_n} &=\sum_{m=0}^{d-1} c_{nm}\ket{m}^A\ket{\psi_{nm}}^B\,,\qquad c_{nm}\in\mathbb C\,,\;\;\;\sum_{m=0}^{d-1} \;|c_{nm}|^2=1.
\end{aligned} 
\end{equation}
As can be seen from these equations, truly hybrid entangled states possess an infinite number of qumode states $\ket{\psi_{nm}}^B$. Therefore, the Gram-Schmidt process ceases to work and the states stay in a Hilbert space of the form ${\mathcal H}^A_d\otimes{\mathcal H}^B_\infty$. Hence, they are not effectively DV, but instead really combined $\text{DV}\otimes\text{CV}$ states and therefore \textit{truly} hybrid. Unfortunately, this true ``hybridness'' has the effect that the states live in an overall infinite-dimensional Hilbert space in the non-Gaussian regime. This makes exact entanglement quantification difficult. 
Now, we present an example which is relevant for some quantum information protocols.

Consider the state 
\begin{equation}\label{eq:noisyQbQm1}
\ket{\psi}^{AB}=\frac{1}{\sqrt{2}}\Bigl(\ket{0}^A\ket{\alpha}^B+\ket{1}^A\ket{-\alpha}^B\Bigr),
\end{equation}
which plays a crucial role in cat-state engineering \cite{Savage}, hybrid quantum communication via qubus approaches \cite{qubus1,qubus3}, and some quantum key distribution schemes \cite{Lorenz,Rigas}. It is transmitted through a one-sided thermal photon noise channel, where the noise only affects subsystem B. Writing this out, 
\begin{equation}\label{eq:noisyQbQm2}
\hat{\rho}'{}^{AB}=({\mathbb 1}^A\otimes\$_{thermal}^B)\ket{\psi}^{AB}\bra{\psi}
\end{equation}
is the state to be investigated. The channel is modeled by a beam splitter coupling subsystem B to the environment, which is in a thermal state. Afterward the environment is traced out (see Fig. \ref{fig:NoiseCh}).  
\begin{figure}[ht]
\begin{center}
\includegraphics[width=8.5cm]{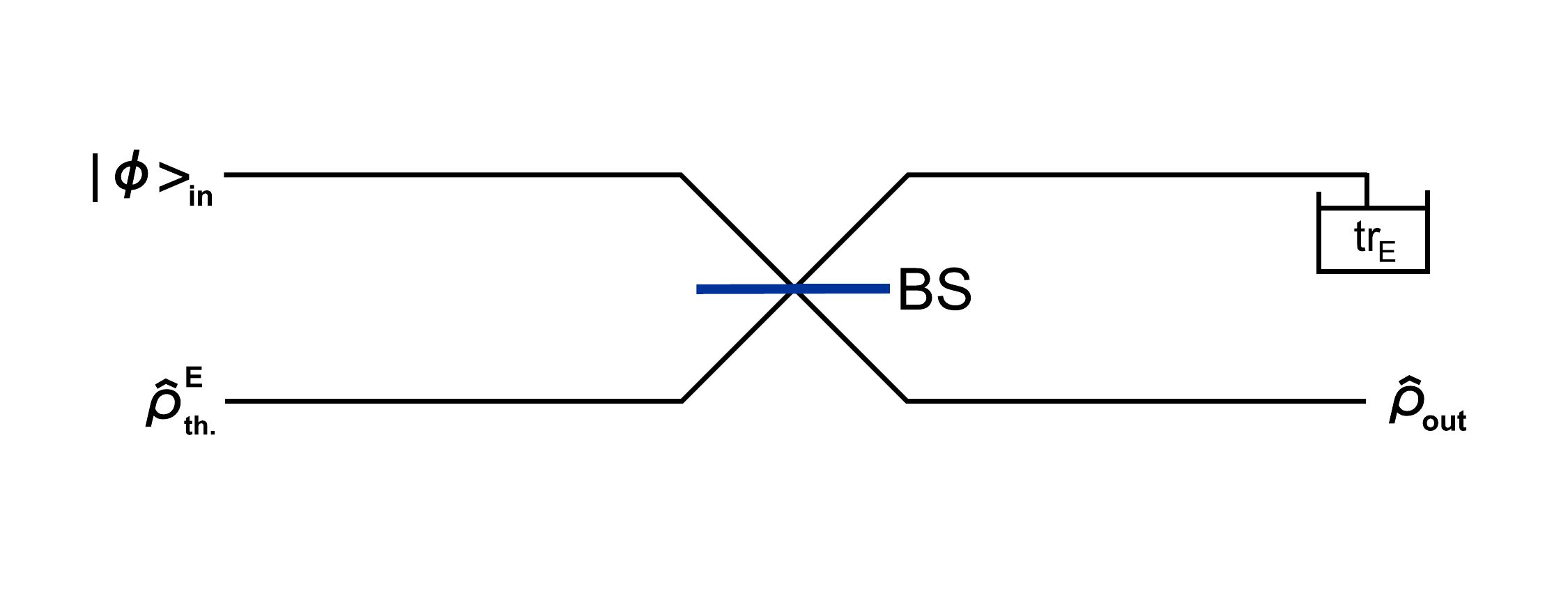}
\caption[Modelling of the photon noise channel.]{(Color online) Modeling of the photon noise channel. The environment mode is in a thermal state and coupled via a beam splitter to the input state to be transmitted. Subsequently the environment mode is traced out and the decohered output state is obtained.}
\label{fig:NoiseCh}
\end{center}
\end{figure}

For the output, 
\begin{equation}\label{eq:noisyQbQm3}
\begin{aligned}
\hat{\rho}'{}^{AB} &=({\mathbb 1}^A\otimes\$_{thermal}^B)\ket{\psi}^{AB}\bra{\psi} \\
&=\frac{1}{2}\sum_{n=0}^\infty \rho_n^{th} \sum_{k,l=0}^n f_{nk}(\eta)f_{nl}(\eta) \\
&\times\biggl(A_{nkl}^{\alpha\alpha}(\eta)\;\ket{0}^A\bra{0}\,\otimes\,\hat{a}^{\dagger^k}\ket{\sqrt{\eta}\alpha}^B\bra{\sqrt{\eta}\alpha}\hat{a}^l  \\
&+ A_{nkl}^{-\alpha-\alpha}(\eta)\;\ket{1}^A\bra{1}\,\otimes\,\hat{a}^{\dagger^k}\ket{-\sqrt{\eta}\alpha}^B\bra{-\sqrt{\eta}\alpha}\hat{a}^l   \\
&+ A_{nkl}^{\alpha-\alpha}(\eta)\;\ket{0}^A\bra{1}\,\otimes\,\hat{a}^{\dagger^k}\ket{\sqrt{\eta}\alpha}^B\bra{-\sqrt{\eta}\alpha}\hat{a}^l \\
&+ A_{nkl}^{-\alpha\alpha}(\eta)\;\ket{1}^A\bra{0}\,\otimes\,\hat{a}^{\dagger^k}\ket{-\sqrt{\eta}\alpha}^B\bra{\sqrt{\eta}\alpha}\hat{a}^l\biggr)
\end{aligned}
\end{equation}
is obtained, where $\hat{a}$ and $\hat{a}^\dagger$ are the mode operators of subsystem B; $\rho_n^{th}$ denotes the thermal photon distribution $\frac{\braket{n_{th}}^n}{(1+\braket{n_{th}})^{n+1}}$ of the environmental thermal state with mean thermal photon number $\braket{n_{th}}$, and $f_{nk}(\eta)$ and $A_{nkl}^{\alpha\beta}(\eta)$ are defined as
\begin{align}
f_{nk}(\eta)&:=\frac{1}{\sqrt{n!}}{\binom n k}\sqrt{\eta}^{n-k}(-\sqrt{1-\eta})^k, \\
A_{nkl}^{\alpha\beta}(\eta)&:=\braket{\sqrt{1-\eta}\beta|\hat{a}^{{}^{n-k}}\hat{a}^{\dagger^{n-l}}|\sqrt{1-\eta}\alpha},
\end{align}
with the beam splitter transmissivity $\eta$ (for a detailed calculation of this result, see Appendix \ref{Apx:AuxCal}). It can be inferred from Eq.  \eqref{eq:noisyQbQm3} that the state $\hat{\rho}'{}^{AB}$ clearly is truly hybrid entangled, as it contains an infinite number of qumode states $\{\hat{a}^{\dagger^k}\ket{\pm\sqrt{\eta}\alpha}_B:k=0,1,\ldots,\infty\}$. Hence, \textit{true hybrid qubit-qumode entanglement} is obtained. Furthermore, the form of the output state illustrates the effect of the thermal photon noise channel in a concrete way: On the one hand, the damping effect due to the beam splitter is clearly visible in terms of $\sqrt{\eta}$ in the states $\hat{a}^{\dagger^k}\ket{\pm\sqrt{\eta}\alpha}$. On the other hand, there is not only damping, but also thermal photon noise, which becomes manifest in the creation operators $\hat{a}^{\dagger^k}$ in the states $\hat{a}^{\dagger^k}\ket{\pm\sqrt{\eta}\alpha}$. Thermal photons ``leak into the system'' and are ``created'' in the damped coherent states. Finally, it is clearly visible how each term of $\hat{\rho}^E_{thermal}=\sum_{n=0}^\infty \rho_n^{th}\,\ket{n}^{E}\bra{n}$ results in the creation of at most $n$ noise photons in the coherent states. However, as descriptive as Eq. \eqref{eq:noisyQbQm3} is, it is inapplicable for further calculations. For example, for entanglement witnessing, moments may have to be computed \cite{SV1,SV2}. Unfortunately, in this case, making use of the state as written in Eq. \eqref{eq:noisyQbQm3}, intractable infinite sums are obtained whose convergence behavior is impossible to be worked out exactly. Of course, truncation at some value $n$ could be performed, which would certainly result in very accurate outcomes provided that value is large enough \cite{Killoran}. However, such a procedure is opposed to the actual intention of analyzing true hybrid entanglement, since a truncated state is not truly hybrid entangled anymore. This is a point which makes the investigation of true hybrid entanglement particularly challenging. Infinite sums or integrals emerge, which have to be calculated \textit{exactly}.

However, the transmitted state can be also written as
\begin{widetext}
\begin{equation}\label{eq:noisyQbQm4}
\begin{aligned}
\hat{\rho}'{}^{AB} & =({\mathbb 1}^A\otimes\$_{thermal}^B)\ket{\psi}^{AB}\bra{\psi} = \frac{1}{2\pi\braket{n_{th}}}\int_{\mathbb C} d^2\gamma\,\mathrm{e}^{-\frac{|\gamma|^2}{\braket{n_{th}}}}\biggl(\ket{0}^A\bra{0}\otimes\ket{\sqrt{\eta}\alpha-\sqrt{1-\eta}\gamma}^B\bra{\sqrt{\eta}\alpha-\sqrt{1-\eta}\gamma} \\
&+\qquad\qquad\ket{1}^A\bra{1}\otimes\ket{-\sqrt{\eta}\alpha-\sqrt{1-\eta}\gamma}^B\bra{-\sqrt{\eta}\alpha-\sqrt{1-\eta}\gamma} \\
&+\tilde{A}^{\alpha-\alpha\gamma}(\eta)\ket{0}^A\bra{1}\otimes\ket{\sqrt{\eta}\alpha-\sqrt{1-\eta}\gamma}^B\bra{-\sqrt{\eta}\alpha-\sqrt{1-\eta}\gamma} \\
&+\tilde{A}^{-\alpha\alpha\gamma}(\eta)\ket{1}^A\bra{0}\otimes\ket{-\sqrt{\eta}\alpha-\sqrt{1-\eta}\gamma}^B\bra{\sqrt{\eta}\alpha-\sqrt{1-\eta}\gamma}  \biggr), 
\end{aligned}
\end{equation}
\newpage
\end{widetext}
with
\begin{equation}
\begin{aligned}
\tilde{A}^{\alpha\beta\gamma}(\eta):&=\braket{\sqrt{1-\eta}\beta +\sqrt{\eta}\gamma|\sqrt{1-\eta}\alpha+\sqrt{\eta}\gamma} \\
&=\mathrm{exp}\Bigl[-\frac{1}{2}|\sqrt{1-\eta}\beta +\sqrt{\eta}\gamma|^2 \\
&\;\,-\frac{1}{2}|\sqrt{1-\eta}\alpha+\sqrt{\eta}\gamma|^2 \\
&\;\,+(\sqrt{1-\eta}\beta^{\ast} +\sqrt{\eta}\gamma^{\ast})(\sqrt{1-\eta}\alpha+\sqrt{\eta}\gamma)\Bigr],
\end{aligned}
\end{equation}
(see Appendix \ref{Apx:AuxCal} for the calculation). This form of the state $\hat{\rho}'{}^{AB}$ is not as insightful as that of Eq. \eqref{eq:noisyQbQm3}. However, it is mathematically much more convenient. Instead of infinite sums, in this form integrals occur which make the calculation of moments straightforward.

Now we want to derive entanglement witnesses for true hybrid entanglement. For this purpose, we exploit the determinant (now $\hat{a}$ and $\hat{a}^\dagger$ are the mode operators corresponding to system A and $\hat{b}$ and $\hat{b}^\dagger$ those of system B)
\begin{equation}\label{eqSdet}
s:=\begin{vmatrix} 
1 & \braket{\hat{a}^\dagger} & \braket{\hat{a}^\dagger\hat{b}}  \\
\braket{\hat{a}} & \braket{\hat{a}^\dagger\hat{a}} & \braket{\hat{a}^\dagger\hat{a}\hat{b}}  \\
\braket{\hat{a}\hat{b}^\dagger} & \braket{\hat{a}^\dagger\hat{a}\hat{b}^\dagger} & \braket{\hat{a}^\dagger\hat{a}\hat{b}^\dagger\hat{b}}
\end{vmatrix},
\end{equation}
which Shchukin and Vogel (SV) proposed and utilized in their work \cite{SV1,SV2}. However, we have to slightly adapt the SV approach to our case, since, instead of a fully CV system, we deal with a hybrid quantum system of a DV and a CV subsystem. There are two ways for this adaption.

The first way is to simply interpret the DV subsystem as living in a subspace of an infinite-dimensional Hilbert space. The DV qudit is interpreted as a CV system supported by ${\mathcal H}_\infty$ and being encoded in a Fock basis. However, it only makes use of a finite number of the basis vectors. Then the SV criteria can be applied just as usual.

The second approach is not to adapt the state, but to adapt the criteria. Assume the system to which the operators $\hat{a}$ and $\hat{a}^\dagger$ belong is $d$-dimensional. Then the orthonormal Hilbert-space basis vectors $\ket{m}$ can be written as column vectors with a ``$1$'' in row $m$ with $m=0,\ldots,d-1$:
\begin{equation}
\ket{m}_d=(0_{\,0}, \ldots , 0_{m-1} ,1_m ,0_{m+1},\ldots ,0_{d-1})^T.
\end{equation}
With this notation the new qudit mode operators $\hat{a}_d$ and $\hat{a}_d^\dagger$ can be defined as proper $d\times d$ matrices:
\begin{align}
\hat{a}_d &= \begin{pmatrix} 
0 & \sqrt{1} & \cdots & \cdots & 0 \\
\vdots & 0 & \sqrt{2} &   & \vdots \\
\vdots &   & 0 & \ddots  & \vdots \\
\vdots &  &  & \ddots & \sqrt{d-1} \\
0 & \cdots & \cdots & \cdots & 0
\end{pmatrix},\\
\hat{a}_d^\dagger &= \begin{pmatrix} 
0 & \cdots & \cdots & \cdots & 0 \\
\sqrt{1} & 0 &  &   & \vdots \\
\vdots &  \sqrt{2} & 0 &   & \vdots \\
\vdots &  & \ddots & \ddots & \vdots \\
0 & \cdots & \cdots & \sqrt{d-1} & 0
\end{pmatrix}.
\end{align}
They have the properties
\begin{align}
\hat{a}_d\ket{n} & = \sqrt{n}\ket{n-1}, \\
\hat{a}_d^\dagger\ket{n} & = (1-\delta_{n,d-1})\sqrt{n+1}\ket{n+1}, \\ \label{eq:AdMod1}
(\hat{a}_d)^d & =0, \\ \label{eq:AdMod2}
(\hat{a}_d^\dagger)^d & =0.
\end{align}
Furthermore, the commutator becomes
\begin{equation}
\begin{aligned}
\lbrack\hat{a}_d,\hat{a}_d^\dagger\rbrack&=\begin{pmatrix} 
1 &  &  &  \\
 & \ddots &  &  \\
 &  & 1 &  \\
 &  &  & -(d-1)
\end{pmatrix}_{d\times d} \\
&=\begin{pmatrix} 
{\mathbb 1}_{d-1} &    \\
 &  -(d-1)
\end{pmatrix}_{d\times d}.
\end{aligned}
\end{equation}
Since any $d$-dimensional qudit $\ket{\psi}_d$ can be written as $\ket{\psi}_d=g^\dagger(\hat{a}_d)\ket{0}_d$ for an appropriate operator function $\hat{g}=g(\hat{a}_d)$ and also the projection operator on $\ket{0}_d\bra{0}$ can still be expressed as $:e^{-\hat{a}_d^\dagger\hat{a}_d}:$, where $:\cdots:$ denotes normal ordering \cite{SV1}, the SV criteria can be also derived for hybrid systems with aid of these new operators. Structurally the same criteria are obtained, except that for certain $i_1,i_2,j_1,j_2$ the moments $M_{ij}(\hat{\rho})=\braket{\hat{a}_d^{\dagger^{i_2}}\hat{a}_d^{^{i_1}}\hat{a}_d^{\dagger^{j_1}}\hat{a}_d^{^{j_2}}\hat{b}^{\dagger^{j_4}}\hat{b}^{^{j_3}}\hat{b}^{\dagger^{i_3}}\hat{b}^{^{i_4}}}_{\hat{\rho}}$ are zero. Due to the properties \eqref{eq:AdMod1} and \eqref{eq:AdMod2} any combination of $i_1,i_2,j_1,j_2$ such that $(\hat{a}_d)^k$ or $(\hat{a}_d^\dagger)^k$ with $k\geq d$ occurs in the moments, nullifies them. In the end, in the matrix of moments the rows and columns corresponding to these combinations of $i_1,i_2,j_1,j_2$ are simply missing.

When applying the SV criteria it hardly makes a difference whether the first or the second approach to the adaption of the criteria is chosen. Only for moments involving terms like $(\hat{a}_d)^k(\hat{a}_d^\dagger)^l$ with $k,l\in\mathbb N$ the two approaches may yield different results. However, for the determinants used in this paper, both lead to the same result.  

For the calculation of the SV determinant $s$ in Eq. \eqref{eqSdet} for the state in Eq. \eqref{eq:noisyQbQm4},
\begin{equation}
s_{\hat{\rho}'{}^{AB}}(\alpha,\eta,\braket{n_{th}})=\frac{1-\eta}{4}\braket{n_{th}}\biggl(1-\frac{\mathrm{e}^{-4|\alpha|^2}}{2}\biggr)-\frac{\eta |\alpha|^2}{2}\mathrm{e}^{-4|\alpha|^2}
\end{equation}
is obtained. A graphical illustration for $\eta=\frac{2}{3}$ is shown in Fig. \ref{fig:ThQbQm1}.
\begin{figure}[ht]
\includegraphics[width=8.5cm]{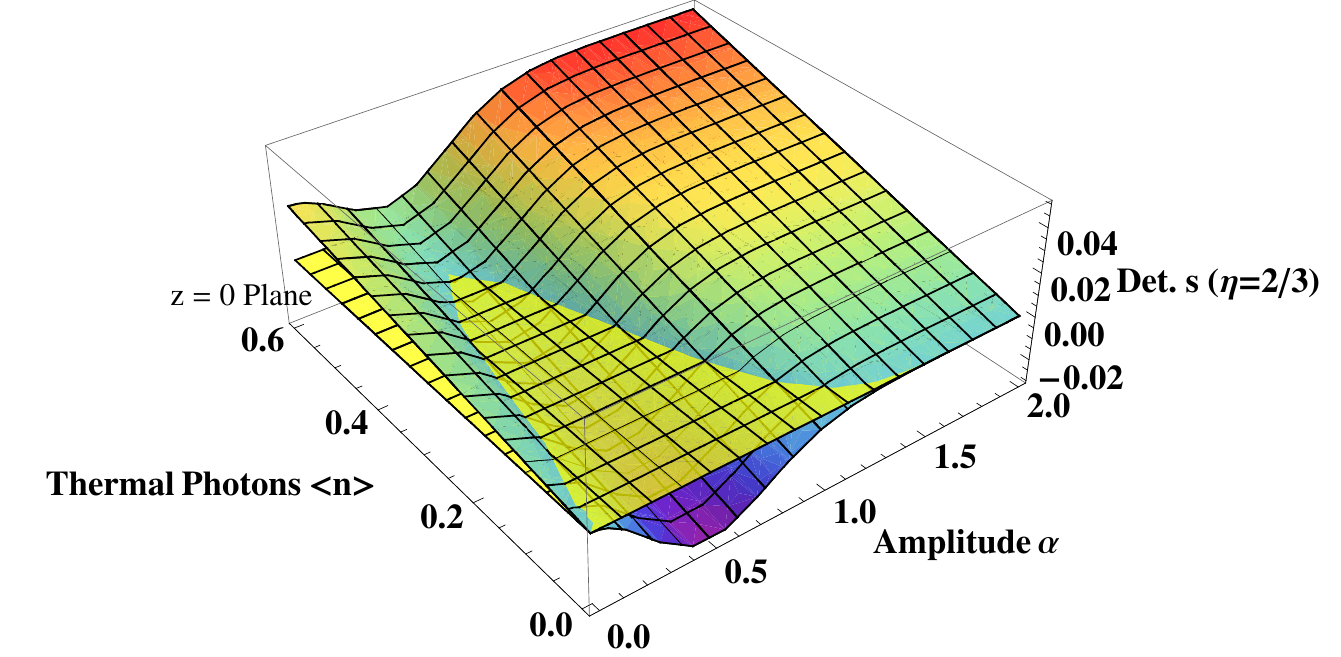}
\includegraphics[width=8.5cm]{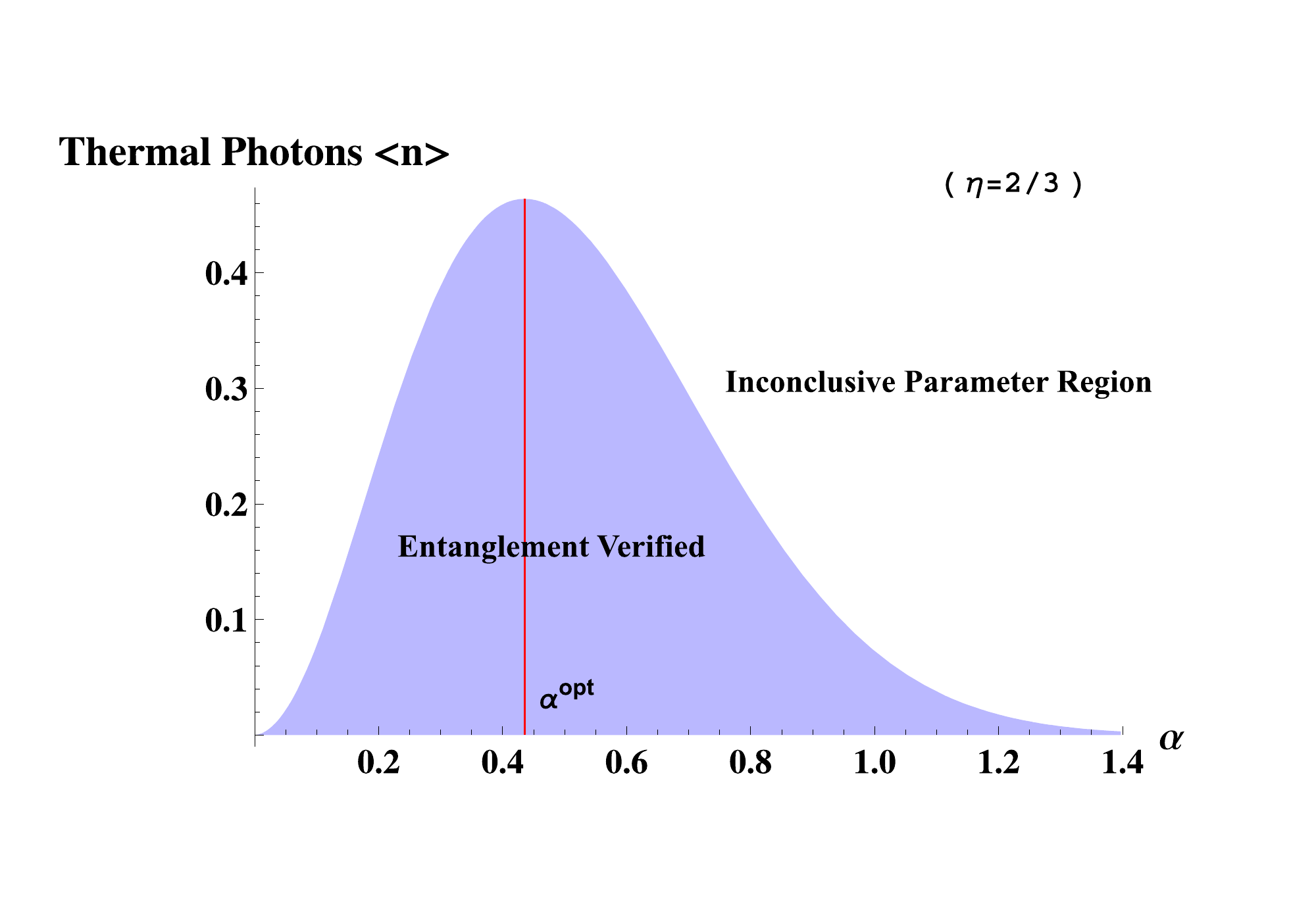}
\caption[SV determinant $s_{\hat{\rho}'{}^{AB}}(\alpha,\eta=\frac{2}{3},{\braket{n_{th}}})$ for the state $\hat{\rho}'{}^{AB}$ with $\eta=\frac{2}{3}$, and parameter regimes of entanglement witnessing.]{(Color online) The upper graph displays the SV determinant $s_{\hat{\rho}'{}^{AB}}(\alpha,\eta=\frac{2}{3},{\braket{n_{th}}})$ for the state $\hat{\rho}'{}^{AB}$ with $\eta=\frac{2}{3}$. Without loss of generality $\alpha\in\mathbb R$ has been assumed. On the lower diagram the two regimes are plotted. There can be clearly identified a region, where the determinant is below zero. Hence, true hybrid entanglement can be witnessed. Furthermore, there is a trade-off behavior which results in the existence of an optimal $\alpha^{opt}$. This $\alpha^{opt}$ corresponds to the most robust state $\hat{\rho}'{}^{AB}_{\alpha^{opt}}$ regarding entanglement witnessing for fixed $\eta=\frac{2}{3}$ and varying mean thermal photon number $\braket{n_{th}}$.}
\label{fig:ThQbQm1}
\end{figure}
It can be observed that there clearly is a parameter region in which entanglement can be detected. Also note the trade-off behavior and the optimal $\alpha^{opt}$ which corresponds to the most robust state $\hat{\rho}'{}^{AB}_{\alpha^{opt}}$ regarding entanglement witnessing for fixed $\eta=\frac{2}{3}$ and varying mean thermal photon number $\braket{n_{th}}$. The origin of this trade-off is based on the trade-off between too little initial entanglement before the channel for low $\alpha$ and very fragile entanglement for too high $\alpha$ ($\alpha \in \mathbb R$).

Furthermore, it is worth pointing out that the witnessed entanglement in this case is actually true hybrid entanglement. It can be concluded that the SV determinants provide a suitable tool for the detection of true hybrid entanglement. Note that we exploited determinants involving moments of $4^{th}$ order. There are also other approaches for witnessing entanglement in similar states, which involve lower-order moments \cite{Rigas,Haseler}. However, often it is necessary to consider moments of higher order to detect entanglement. For example, the detection of entanglement in the pure state ${\mathcal N}(\alpha)(\ket{\alpha,\alpha}-\ket{-\alpha,-\alpha})$ requires moments of $4^{th}$ order \cite{SV1}.  

When setting $s_{\hat{\rho}'{}^{AB}}(\alpha,\eta,\braket{n_{th}})<0$ and solving this inequality for $\braket{n_{th}}$,
\begin{equation}\label{eq:ThQbQmCut}
\braket{n_{th}}<\frac{4\eta|\alpha|^2}{(1-\eta)(2\mathrm{e}^{4|\alpha|^2}-1)}
\end{equation}
is obtained \footnote[2]{Comparing Eq. \eqref{eq:ThQbQmCut} with the known entanglement-breaking condition for a thermal noise channel \cite{Holevo}, $\braket{n_{th}}\geq\frac{\eta}{1-\eta}$, reveals that our witness would indeed never wrongly detect entanglement, while at the same time, it may not detect all entangled states if $\braket{n_{th}}<\frac{\eta}{1-\eta}$.}. For parameters $(\alpha,\eta,\braket{n_{th}})$ satisfying this inequality, entanglement is detected. Furthermore, the inequality can be used to define a surface. The parameters $(\alpha,\eta,\braket{n_{th}})_{ent}$ for which entanglement is verified lie below this surface (see Fig. \ref{fig:ThQbQm2}).
\begin{figure}[ht]
\begin{center}
\includegraphics[width=8.5cm]{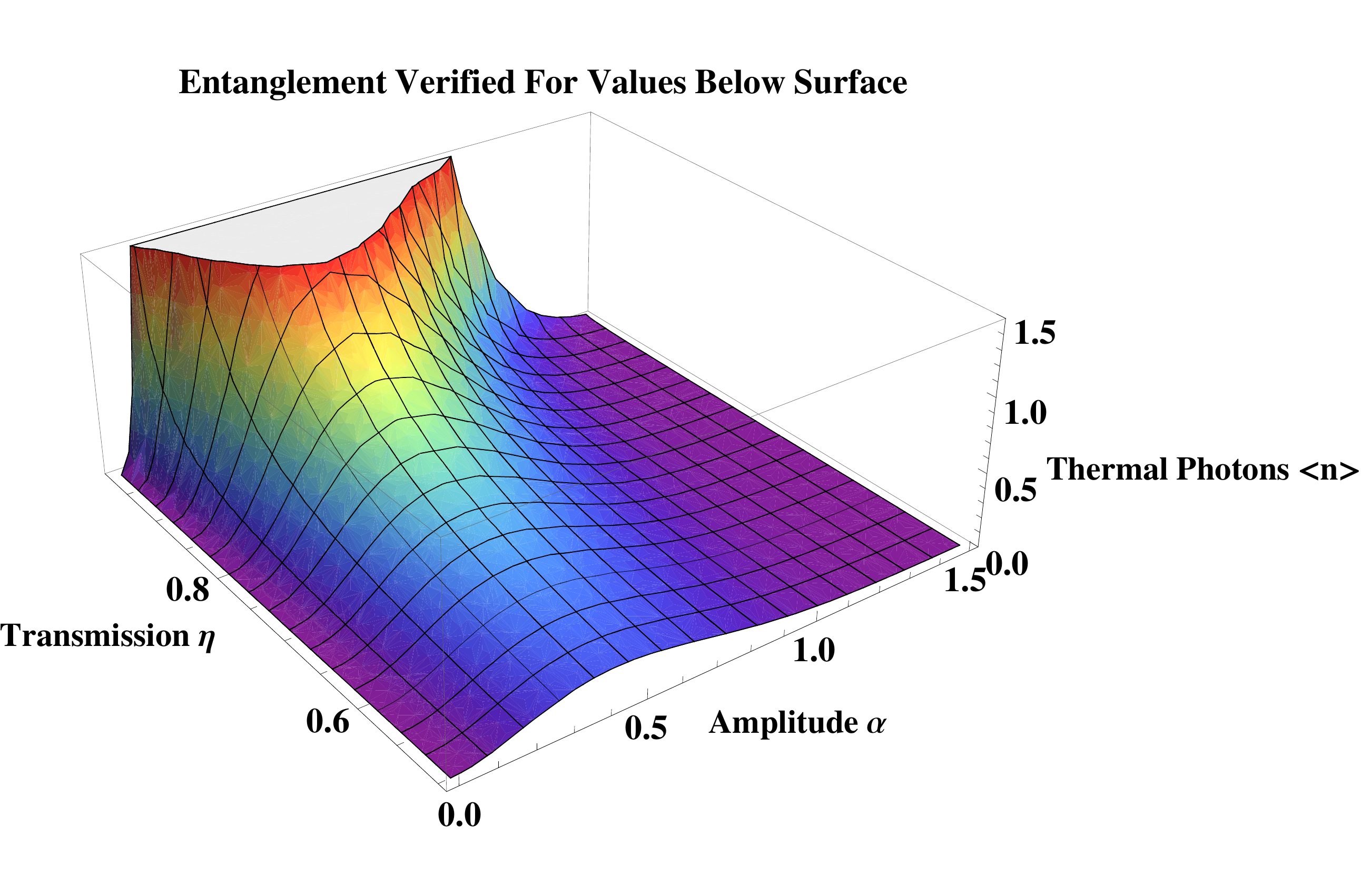}
\caption[Surface defined by equation \eqref{eq:ThQbQmCut} describing entanglement witnessing in the truly hybrid entangled state $\hat{\rho}'{}^{AB}$ corresponding to equation \eqref{eq:noisyQbQm4}.]{(Color online) Surface defined by Eq. \eqref{eq:ThQbQmCut} ($\alpha\in\mathbb R$). For parameter triples $(\alpha,\eta,\braket{n_{th}})$ lying below it, entanglement is witnessed. Once again, the trade-off behavior can be recognized and an optimal $\alpha^{opt}$ for which entanglement can be detected in the presence of the strongest possible noise.}
\label{fig:ThQbQm2}
\end{center}
\end{figure}
However, it is rather cumbersome to read off exact parameters in such a 3D plot. Hence, regions of successful entanglement detection are plotted in Fig. \ref{fig:ThQbQm3} for different values of $\eta$. 
\begin{figure}[ht]
\begin{center}
\includegraphics[width=8.5cm]{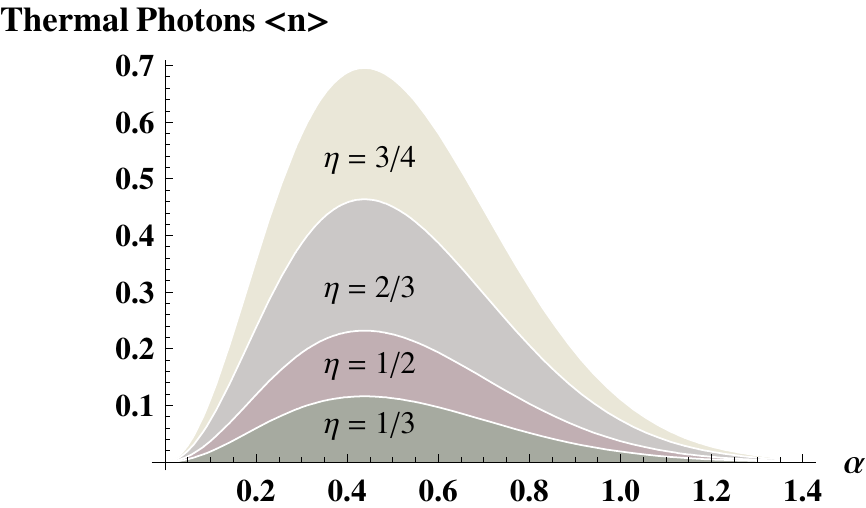}
\caption[Regions of entanglement detection in the truly hybrid entangled state $\hat{\rho}'{}^{AB}$ corresponding to Eq. \eqref{eq:noisyQbQm4} for different values of transmissivity $\eta$.]{(Color online) Regions of entanglement detection for different values of transmissivity $\eta$. Note that the optimal $\alpha^{opt}$ has a fixed value ($\alpha\in\mathbb R$).}
\label{fig:ThQbQm3}
\end{center}
\end{figure}
As expected, the higher the transmissivity $\eta$ the greater the parameter regions of entanglement detection. Furthermore, we can infer from Eq. \eqref{eq:ThQbQmCut} that the optimal $\alpha^{opt}$ does not depend on $\eta$. This is also recognizable in Fig. \ref{fig:ThQbQm3}. We find $\alpha^{opt}\approx 0.44$. It is actually quite remarkable that the optimal amplitude $\alpha^{opt}$ regarding entanglement witnessing with the SV determinant $s$ does not depend on the channel parameters at all; $\alpha^{opt}$ is determined solely by the choice of the SV determinant. 

\begin{figure}[ht]
\begin{center}
\includegraphics[width=8.5cm]{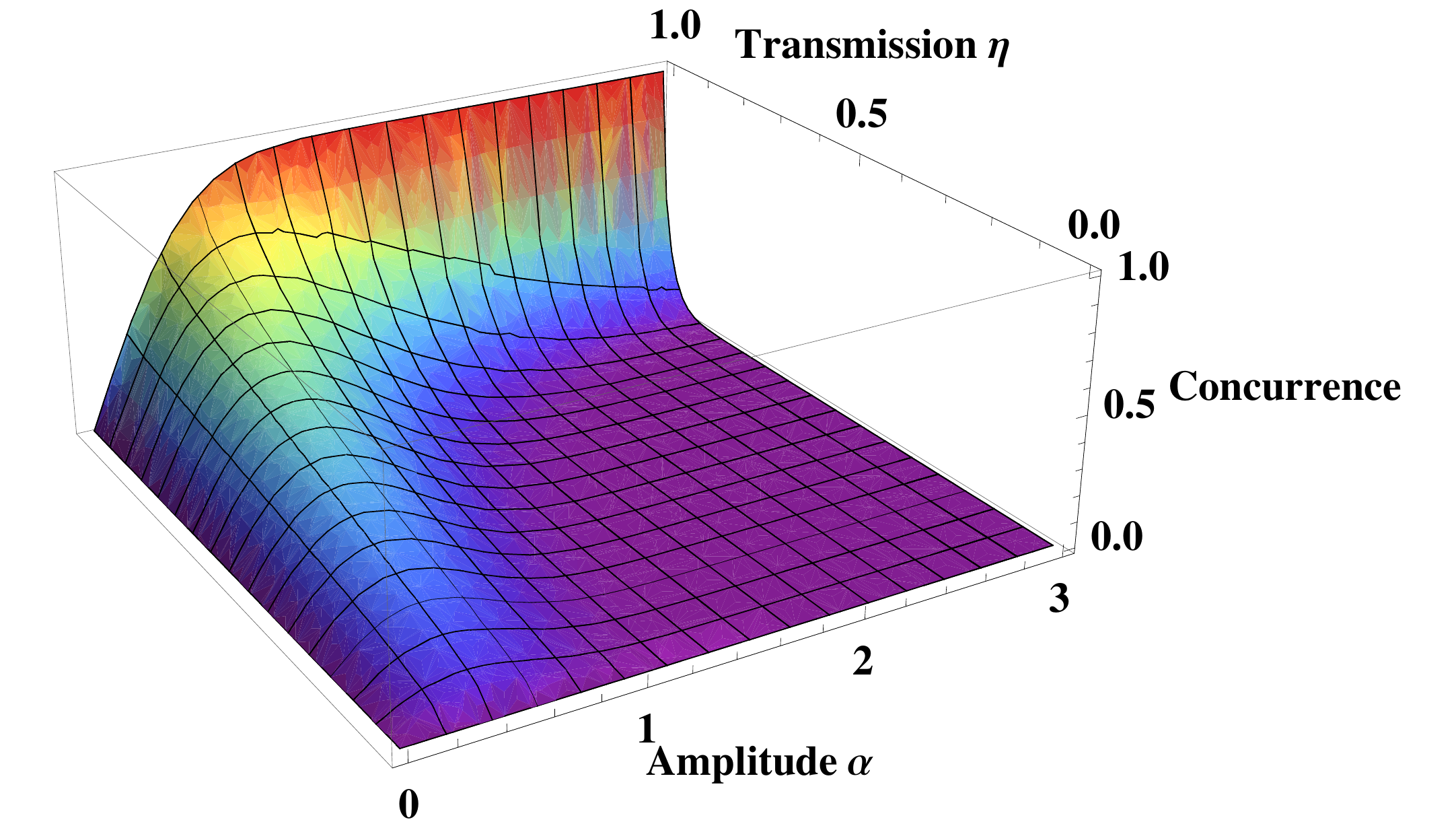}
\caption[Concurrence of the state $\hat{\rho}^{AB}$, as a function of transmissivity $\eta$ and amplitude $\alpha$.]{(Color online) Concurrence of the state $\hat{\rho}'{}^{AB}$ for $\braket{n_{th}}=0$, as a function of transmissivity $\eta$ and amplitude $\alpha$. On the one hand, the higher $\alpha$ the greater the initial entanglement, but also the more sensitive the state is to photon loss. On the other hand, for low $\alpha$ there is only little initial entanglement. However, the state is more robust against losses. Hence, there is a trade-off behavior, and an optimal $\alpha^{opt}_\eta$ depending on $\eta$ exists for which the output entanglement is maximal.}
\label{fig:CADampCh}
\end{center}
\end{figure}
Compare this to Fig. \ref{fig:CADampCh}, which shows the concurrence \cite{Wootters} of the state when setting $\braket{n_{th}}=0$, which is given by 
\begin{equation}\label{BlaCon}
\begin{aligned}
C(\hat{\rho}'{}^{AB})=&\frac{1}{2}\sqrt{1-\mathrm{e}^{-4\eta|\alpha|^2}}\Bigl(\sqrt{1+3\mathrm{e}^{-4(1-\eta)|\alpha|^2}} \\
&-\sqrt{1-\mathrm{e}^{-4(1-\eta)|\alpha|^2}}\Bigr).
\end{aligned}
\end{equation}
There, the optimal $\tilde{\alpha}^{opt}(\eta)$ does depend on the transmissivity $\eta$: The greater the transmissivity, the higher $\tilde{\alpha}^{opt}(\eta)$. This is quite remarkable. It can be inferred that the ability to detect entanglement in $\hat{\rho}'{}^{AB}$, depending only on the choice of the SV determinant, does not behave in the same way as the entanglement of $\hat{\rho}'{}^{AB}$ itself, which of course depends on the thermal channel's parameters $\eta$ and $\braket{n_{th}}$.

Note that in this case, the state of Eqs. \eqref{eq:noisyQbQm1} and \eqref{eq:noisyQbQm2} actually becomes an effectively DV hybrid entangled state, when setting the mean thermal photon number of the channel to zero. The state is then subject only to amplitude damping, resulting in
\begin{equation}\label{Eq0thermCase}
\begin{aligned}
\hat{\rho}'{}^{AB}_{\braket{n_{th}}=0} &=\frac{1}{2}\Bigl(\ket{0}^A\bra{0}\otimes\ket{\sqrt{\eta}\alpha}^B\bra{\sqrt{\eta}\alpha} \\
&+ \ket{1}^A\bra{1}\otimes\ket{-\sqrt{\eta}\alpha}^B\bra{-\sqrt{\eta}\alpha} \\
&+\mathrm{e}^{-2(1-\eta)|\alpha|^2}\, \ket{0}^A\bra{1}\otimes\ket{\sqrt{\eta}\alpha}^B\bra{-\sqrt{\eta}\alpha} \\
&+\mathrm{e}^{-2(1-\eta)|\alpha|^2}\, \ket{1}^A\bra{0}\otimes\ket{-\sqrt{\eta}\alpha}^B\bra{\sqrt{\eta}\alpha}\Bigr),
\end{aligned}
\end{equation}
which is mixed but contains only two qumode states $\ket{\pm\sqrt{\eta}\alpha}^B$. Hence, we can perform an inverse Gram-Schmidt process and describe the state by a proper effective qubit-qubit density matrix:
\begin{equation}
\hat{\rho}'{}^{AB}_{\braket{n_{th}}=0} = \frac{1}{2}\begin{pmatrix}
1 & 0 & \lambda\kappa & \kappa\sqrt{1-\lambda^2} \\
0 & 0 & 0 & 0 \\
\lambda\kappa & 0 & \lambda^2 & \lambda\sqrt{1-\lambda^2} \\
\kappa\sqrt{1-\lambda^2} & 0 & \lambda\sqrt{1-\lambda^2} & 1-\lambda^2
\end{pmatrix},
\end{equation}
with $\kappa:=\mathrm{e}^{-2(1-\eta)|\alpha|^2}$ and $\lambda:=\braket{-\sqrt{\eta}\alpha|\sqrt{\eta}\alpha}=\mathrm{e}^{-2\eta|\alpha|^2}$. The concurrence of this state is given in Eq. \eqref{BlaCon} and plotted in Fig. \ref{fig:CADampCh} for $\alpha \in \mathbb R$.

If, besides setting $\braket{n_{th}}=0$, we also set the transmissivity of the channel to one, the channel is completely canceled and we obtain a pure bipartite hybrid entangled state:
\begin{equation}\label{eq:noisyQbQm1norm}
\ket{\psi}^{AB}=\frac{1}{\sqrt{2}}\Bigl(\ket{0}^A\ket{\alpha}^B+\ket{1}^A\ket{-\alpha}^B\Bigr).
\end{equation}
Hence, in this physical example, we can access all three classes of bipartite hybrid entanglement and even switch between them using the transmissivity $\eta$ and the mean thermal photon number $\braket{n_{th}}$.

\subsection{Yet another truly hybrid entangled state}\label{subsec:OtherTrueHE}
In order to demonstrate that we can easily construct other quantum states featuring true hybrid entanglement, we shall present another truly hybrid entangled state and its entanglement detection:
\begin{equation}\label{eq:trueHE2}
\begin{aligned}
\hat{\rho}^{AB}&=\sum_{n=1}^{\infty}p_n\ket{\psi_n}^{AB}\bra{\psi_n}, \\
\ket{\psi_n}^{AB}&=\frac{1}{\sqrt{2}}\Bigl(\ket{0}^A\ket{\sqrt{n}\alpha}^B+\ket{1}^A\ket{-\sqrt{n}\alpha}^B\Bigr), \\
p_n&=\frac{1-x}{x}x^n\,,\qquad 0<x<1\,,\qquad \alpha\in\mathbb R.
\end{aligned}
\end{equation}
As this state contains an infinite set of qumode states $\{\ket{\pm\sqrt{n}\alpha}^B:n=1,2,\ldots,\infty\}$, it is clearly truly hybrid entangled, exhibiting (true) hybrid entanglement between a qumode and a qubit. 

Now, we shall witness entanglement in this state with aid of the SV determinant $s$ of Eq. \eqref{eqSdet}. For the calculation of the moments, the identities
\begin{align}
\sum_{n=1}^\infty x^n &= \frac{x}{1-x}, \\ 
\sum_{n=1}^\infty n\,x^n &= \frac{x}{(1-x)^2},
\end{align}
are exploited, and we obtain
\begin{equation}\label{eq:ArtiS}
\begin{aligned}
s(x,\alpha)&=\frac{1}{8}\biggl[\frac{2\alpha^2}{1-x}-\Bigl(\frac{\alpha(1-x)}{x}\sum_{n=1}^\infty\sqrt{n}(x\mathrm{e}^{-2\alpha^2})^n\Bigr)^2 \\
&-2\Bigl(\frac{\mathrm{e}^{-2\alpha^2}(1-x)}{1-x\mathrm{e}^{-2\alpha^2}}\Bigr)\Bigl(\frac{\alpha(1-x)}{x}\sum_{n=1}^\infty\sqrt{n}(x\mathrm{e}^{-2\alpha^2})^n\Bigr) \\
&\times\Bigl(\frac{\alpha(1-x)}{x}\sum_{n=1}^\infty\sqrt{n}x^n\Bigr)-2\Bigl(\frac{\alpha(1-x)}{x}\sum_{n=1}^\infty\sqrt{n}x^n\Bigr)^2 \\
&-\frac{\alpha^2}{1-x}\Bigl(\frac{\mathrm{e}^{-2\alpha^2}(1-x)}{1-x\mathrm{e}^{-2\alpha^2}}\Bigr)^2\biggr].
\end{aligned}
\end{equation}
It is clear that
\begin{equation}
\sum_{n=1}^\infty x^n<\sum_{n=1}^\infty \sqrt{n}\,x^n<\sum_{n=1}^\infty n\,x^n,
\end{equation}
for $0<x<1$. Hence,
\begin{align}
\frac{x}{1-x}< & \sum_{n=1}^\infty \sqrt{n}\,x^n<\frac{x}{(1-x)^2}, \\
\frac{x\,\mathrm{e}^{-2\alpha^2}}{1-x\,\mathrm{e}^{-2\alpha^2}}< & \sum_{n=1}^\infty \sqrt{n}\,(x\,\mathrm{e}^{-2\alpha^2})^n < \frac{x\,\mathrm{e}^{-2\alpha^2}}{(1-x\,\mathrm{e}^{-2\alpha^2})^2}.
\end{align}
Inserting the lower bounds into the sums of Eq. \eqref{eq:ArtiS} yields
\begin{equation}
s'(x,\alpha)=\frac{\alpha^2}{8}\Bigl[\frac{2x}{1-x}-\Bigl(\frac{1-x}{1-x\mathrm{e}^{-2\alpha^2}}\Bigr)^2\mathrm{e}^{-4\alpha^2}\Bigl(3+\frac{1}{1-x}\Bigr)\Bigr].
\end{equation}
Since $s'(x,\alpha)>s(x,\alpha)\,\forall\,\alpha\in{\mathbb R},\,x\in]0,1[$, from $s'(x,\alpha)<0$ follows $s(x,\alpha)<0$. Therefore, $s'(x,\alpha)<0$ is a sufficient criterion for entanglement detection. It is plotted in Fig. \ref{fig:artiS}.
\begin{figure}[ht]
\includegraphics[width=8.5cm]{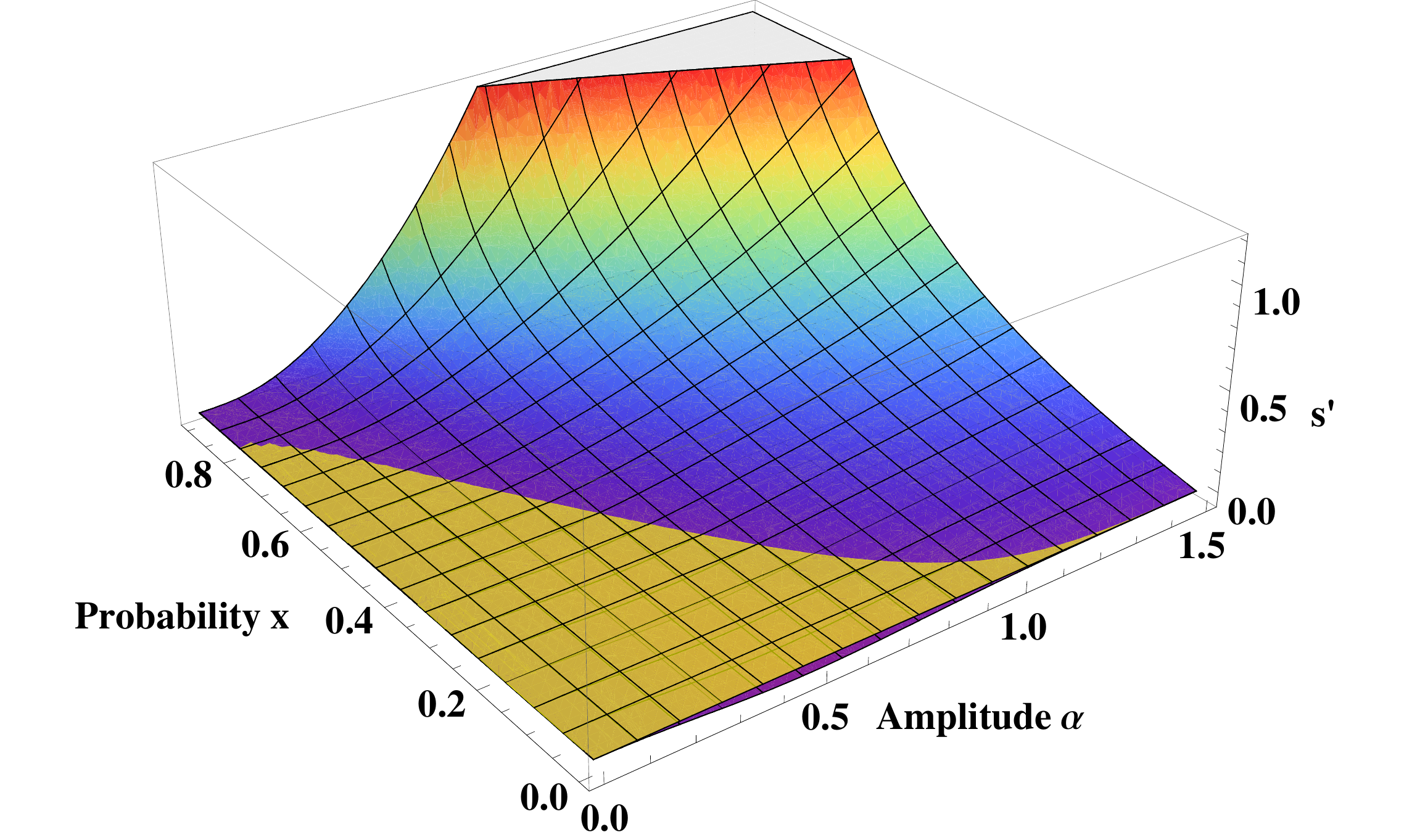}
\includegraphics[width=8.5cm]{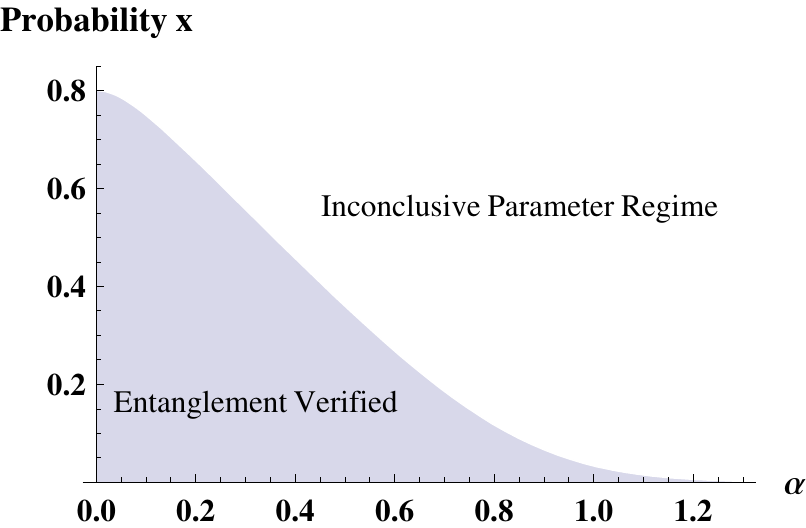}
\caption[Entanglement witnessing in the state corresponding to equation \eqref{eq:trueHE2} via $s'_1(x,\alpha)$.]{(Color online) The upper graph shows $s'(x,\alpha)$ which is derived from the original SV determinant $s(x,\alpha)$. 
The yellow plane denotes zero. The lower diagram presents the cutting line between $s'(x,\alpha)$ and the zero plane. It displays the witnessing region. We also want to point out that for $\alpha=0$ the state is of course not entangled and $s'(x,\alpha=0)$ becomes zero.}
\label{fig:artiS}
\end{figure}
The graphics show that entanglement can be verified for sufficiently small $x$ and $\alpha$. This can be understood in the following way. On the one hand, small $x$ corresponds to little mixing between the pure states $\ket{\psi_n}^{AB}$, which themselves are highly entangled, depending on the amplitude $\sqrt{n}\alpha$. For pure states of the form $\ket{\psi_n}^{AB}$, entanglement witnessing can be performed perfectly with the SV determinant $s$. On the other hand, we have already seen in previous calculations (see Figs. \ref{fig:ThQbQm1} and \ref{fig:ThQbQm3}) that entanglement detection via $s$ fails for large $\alpha$.

To conclude, a second example for a truly hybrid entangled state has been presented whose entanglement can again be verified with aid of the SV criteria.

\section{Multipartite hybrid entanglement}\label{sec:4}
In this section, investigations regarding multipartite hybrid entanglement are presented. In general, $N$-partite hybrid entangled quantum systems live in Hilbert spaces of the form ${\mathcal H}_{d_1}\otimes\ldots\otimes{\mathcal H}_{d_N}$, where some $d_i$ are finite and some infinite. Considering tripartite hybrid entanglement there are two cases: Either the Hilbert space looks like ${\mathcal H}_{d_1}\otimes{\mathcal H}_{d_2}\otimes{\mathcal H}_\infty$ or ${\mathcal H}_{d_1}\otimes{\mathcal H}_\infty\otimes{\mathcal H}_\infty$, with finite $d_1,d_2$. 

Consider the first case, where only one subsystem is CV. A general hybrid entangled pure state in ${\mathcal H}_{d_1}^A\otimes{\mathcal H}_{d_2}^B\otimes{\mathcal H}^C_\infty$ may be defined as
\begin{equation}
\ket{\psi}^{ABC}=\sum_{i,j=1}^{d_1,d_2}c_{ij}\ket{e_i}^A\ket{e_j}^B\ket{\psi_{ij}}^C\,,\qquad\sum_{i,j}^{d_1,d_2}|c_{ij}|^2=1,
\end{equation}
where $\ket{\psi_{ij}}^C$ represent some qumode states. As there are at most $d_1\times d_2$ such qumode states, a Gram-Schmidt process can be executed to write the state as a multipartite, effectively DV hybrid entangled state. Therefore, if all $\ket{\psi_{ij}}^C$ are linearly independent the state effectively lives in a Hilbert space of the form ${\mathcal H}_{d_1}^A\otimes{\mathcal H}_{d_2}^B\otimes{\mathcal H}^C_{d_1\times d_2}$. 

In the second case, where the initial Hilbert space looks like ${\mathcal H}_{d}^A\otimes{\mathcal H}_\infty^B\otimes{\mathcal H}_\infty^C$, a general hybrid entangled pure state is
\begin{equation}\label{MultiHy2}
\ket{\psi}^{ABC}=\sum_{i,j=1}^{d,\infty}c_{ij}\ket{e_i}^A\ket{\phi_{ij}}^B\ket{\psi_{ij}}^C,
\end{equation}
where both $\ket{\phi_{ij}}^B$ and $\ket{\psi_{ij}}^C$ represent qumode states, and the $c_{ij}$ are chosen such that $\braket{\psi|\psi}=1$. Obviously, in this case an infinite number of qumode states is present for an infinite number of $c_{ij}\neq0$. No Gram-Schmidt process can be performed. Hence, these states show \textit{multipartite true hybrid entanglement}.

To conclude, in the multipartite regime, already for pure states there are characteristic differences between the different possible configurations. Furthermore, in contrast to the bipartite setting, there is also \textit{pure} true hybrid entanglement. When dealing with two parties, true hybrid entanglement only occurs for specific types of mixed states, as described in the previous sections. This can now be generalized. True hybrid entanglement can be obtained in two ways: Either the system is mixed with an infinite number of mix terms; then it is sufficient that only one subsystem is CV. Or two or more subsystems are CV; then the state is not even required to be mixed. However, every hybrid entangled mixed state with a finite number of mix terms, which contains only one CV subsystem is effectively DV. An $N$-partite mixed state of this type in ${\mathcal H}_{d_1}\otimes\ldots\otimes{\mathcal H}_{d_{N-1}}\otimes{\mathcal H}_{\infty}$ can be always described in a Hilbert space of the form ${\mathcal H}_{d_1}\otimes\ldots\otimes{\mathcal H}_{d_{N-1}}\otimes{\mathcal H}_{\Xi}$, where
\begin{equation}
\Xi=M\prod_{i=1}^{N-1} d_i,
\end{equation}
and $M$ is the number of mix terms. 

As an example, we want to discuss an explicit tripartite hybrid entangled state, supported by a Hilbert space of the form ${\mathcal H}_{d}\otimes{\mathcal H}_\infty\otimes{\mathcal H}_\infty$, with finite $d=2$:
\begin{equation}\label{MultiHyExamp}
\ket{\tilde{\psi}}^{ABC}=\frac{1}{\sqrt{2}}\biggl(\ket{e_0}^{A}\ket{\phi_0}^{B}\ket{\psi_0}^{C}+\ket{e_1}^{A}\ket{\phi_1}^{B}\ket{\psi_1}^{C}\biggr).
\end{equation}
In this example, the special case occurs that there does not exist an infinite number of $c_{ij}\neq0$. In terms of Eq. \eqref{MultiHy2}, $c_{ij}=\frac{1}{\sqrt{2}}$ for $i=1,2$ and $j=1$, and $c_{ij}=0\;\forall \;j>1$. Hence, there is no true hybrid entanglement and the state can be described by DV methods. 

Defining ${\mathcal Q}_{\phi}:=\braket{\phi_0|\phi_1}$ and ${\mathcal Q}_{\psi}:=\braket{\psi_0|\psi_1}$, an inverse Gram-Schmidt process yields
\begin{align}
\ket{\phi_0}^{B}&=\ket{e_0}^{B}, \\
\ket{\phi_1}^{B}&={\mathcal Q}_{\phi}\ket{e_0}^{B} + \sqrt{1-|{\mathcal Q}_{\phi}|^2}\ket{e_1}^{B}, \\
\ket{\psi_0}^{C}&=\ket{e_0}^{C}, \\
\ket{\psi_1}^{C}&={\mathcal Q}_{\psi}\ket{e_0}^{C} + \sqrt{1-|{\mathcal Q}_{\psi}|^2}\ket{e_1}^{C},
\end{align}
and therefore,
\begin{equation}
\begin{aligned}
\ket{\tilde{\psi}}^{ABC}&=\frac{1}{\sqrt{2}}\biggl(\ket{e_0}^{A}\ket{e_0}^{B}\ket{e_0}^{C}+{\mathcal Q}_{\phi}{\mathcal Q}_{\psi}\ket{e_1}^{A}\ket{e_0}^{B}\ket{e_0}^{C} \\
& +\sqrt{(1-|{\mathcal Q}_{\phi}|^2)(1-|{\mathcal Q}_{\psi}|^2)}\ket{e_1}^{A}\ket{e_1}^{B}\ket{e_1}^{C} \\
& +{\mathcal Q}_{\psi}\sqrt{1-|{\mathcal Q}_{\phi}|^2}\ket{e_1}^{A}\ket{e_1}^{B}\ket{e_0}^{C} \\
& +{\mathcal Q}_{\phi}\sqrt{1-|{\mathcal Q}_{\psi}|^2}\ket{e_1}^{A}\ket{e_0}^{B}\ket{e_1}^{C}\biggr).
\end{aligned}
\end{equation}
The state is effectively a three-qubit state in ${\mathcal H}_{2}^A\otimes{\mathcal H}_{2}^B\otimes{\mathcal H}^C_2$. So, we can analyze the state with regard to its bipartite entanglement and its residual, GHZ-like, genuine tripartite entanglement $\tau_{res}$ \cite{CKW}. The relevant squared concurrences are
\begin{align}
C^2(A|B)&=|{\mathcal Q}_{\psi}|^2(1-|{\mathcal Q}_{\phi}|^2) \notag \\
& =|\braket{\psi_0|\psi_1}|^2(1-|\braket{\phi_0|\phi_1}|^2) \\
C^2(A|C)&=|{\mathcal Q}_{\phi}|^2(1-|{\mathcal Q}_{\psi}|^2) \notag \\
& =|\braket{\phi_0|\phi_1}|^2(1-|\braket{\psi_0|\psi_1}|^2),  \\
C^2(B|C)&=0, \\ 
C^2(A|BC)&=1-|{\mathcal Q}_{\phi}|^2|{\mathcal Q}_{\psi}|^2 \notag \\
& =1-|\braket{\phi_0|\phi_1}|^2|\braket{\psi_0|\psi_1}|^2.
\end{align}
Hence,
\begin{equation}
\begin{aligned}
\tau_{res}&=(1-|{\mathcal Q}_{\phi}|^2)(1-|{\mathcal Q}_{\psi}|^2) \\
&=(1-|\braket{\phi_0|\phi_1}|^2)(1-|\braket{\psi_0|\psi_1}|^2),
\end{aligned}
\end{equation}
which is plotted in Fig. \ref{fig:resEnt}. 
\begin{figure}[ht]
\begin{center}
\includegraphics[width=8.5cm]{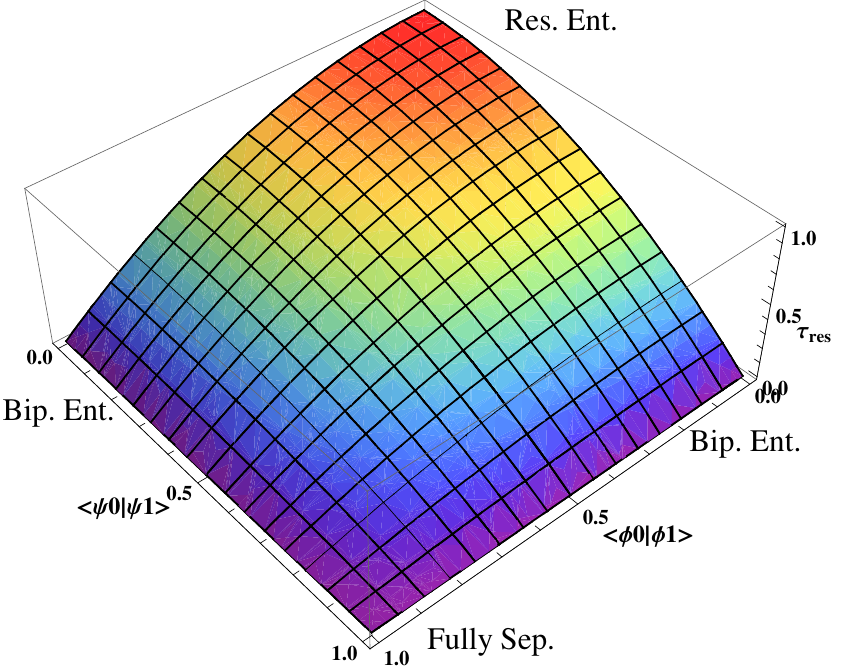}
\caption[The residual entanglement of the state $\ket{\psi}^{ABC}$ as a function of the overlaps $\braket{\phi_0|\phi_1}$ and $\braket{\psi_0|\psi_1}$.]{(Color online) The residual entanglement of the state $\ket{\tilde{\psi}}^{ABC}$ as a function of the overlaps $\braket{\phi_0|\phi_1}$ and $\braket{\psi_0|\psi_1}$, which are assumed to be real without loss of generality. In the upper corner, for $\tau_{res}=1$, the state is maximally GHZ-like entangled. It shows no bipartite entanglement. In the left and the right corners, the residual entanglement is zero. Nevertheless, the state is maximally entangled, but just bipartite entangled. Finally, in the lower corner, the state is fully separable. For either $\braket{\phi_0|\phi_1}$=0 or $\braket{\psi_0|\psi_1}$=0, the state is always 1-ebit entangled, while for both $\braket{\phi_0|\phi_1}\neq0$ and $\braket{\psi_0|\psi_1}\neq0$ the state has less than 1-ebit entanglement.}
\label{fig:resEnt}
\end{center}
\end{figure}
Furthermore, the total entanglement $C^2_{total}$ is given by
\begin{equation}
\begin{aligned}
C^2_{total}&=C^2(A|B)+C^2(A|C)+C^2(B|C)+\tau_{res} \\
&=1-|\braket{\phi_0|\phi_1}|^2|\braket{\psi_0|\psi_1}|^2.
\end{aligned}
\end{equation}
These results provide a basis for an interesting discussion. Figure \ref{fig:resEnt} looks quite unspectacular. However, it contains very useful information in combination with the other results. First, it can be seen that for both overlaps $\braket{\phi_0|\phi_1}$ and $\braket{\psi_0|\psi_1}$ being zero, the state is exactly the GHZ state, which shows only GHZ-like tripartite entanglement but no bipartite entanglement. In the case that only one of the overlaps is zero, the state is always 1-ebit entangled, while it can be tuned between GHZ-like entanglement and common bipartite entanglement. If the overlap which is not zero becomes one, maximal Bell-state-like entanglement occurs. Finally, if none of the overlaps is zero the state is not 1-ebit entangled anymore and it corresponds to a mixture of bipartite entanglement and GHZ-like tripartite entanglement. If both overlaps are one, the state becomes simply fully separable, showing no entanglement at all. All this is basically illustrated in Fig. \ref{fig:resEnt}. The upper corner of the graph corresponds to a GHZ state, while the left and the right corners represent states which are maximally bipartite entangled. Every point in the graph which is not at the upper left or upper right edge corresponds to a less than 1-ebit entangled state, while states lying on these edges are 1-ebit entangled. States on the lower edges of the graph are only bipartite entangled, while all other states show also GHZ-like tripartite entanglement. Finally, the lower corner corresponds to a fully separable state.

As a result, just by tuning simple overlaps between the involved qumode states, it is possible to gradually switch between these different entanglement scenarios. Such tuning of overlaps is a relatively easy task when the qumode states are simply realized by ordinary coherent states. Then only the amplitudes have to be adjusted. However, the experimental preparation of the overall entangled tripartite state $\ket{\tilde{\psi}}^{ABC}$ may cause difficulties. Nevertheless, this idea of tuning between various entanglement configurations by modification of the overlaps of the participating qumode states may be a scheme which could possibly be experimentally realized in the future.

\section{Summary and Conclusion}\label{sec:5}
In this paper, first, we proved that hybrid entanglement is necessarily non-Gaussian. Then, utilizing an inverse Gram-Schmidt process, we presented a classification scheme regarding hybrid entanglement which distinguishes between effective DV hybrid entanglement and true hybrid entanglement. With aid of this characterization one can always find out whether a given hybrid entangled state can be analyzed by means of DV methods or not. To illustrate this framework, a few exemplary states have been discussed, and witnessing of true hybrid entanglement has been demonstrated exploiting the well-known SV inseparability criteria. Finally, we examined multipartite hybrid entanglement, especially the tripartite case, and briefly outlined some differences compared to the bipartite regime. As an example, a typical tripartite hybrid entangled state was investigated.

To conclude, we want to point out that, although many ideas for hybrid protocols and schemes have appeared recently, no thorough classification of hybrid entanglement from a more formal point of view has been proposed so far. The present work is supposed to fill this gap. Nevertheless, there are many open questions: For example, entanglement quantification of true hybrid entanglement remains an unsolved problem in general. Especially in the multipartite case, various questions remain. For example, what can happen when we consider more parties than just two or three?

\subsection*{ACKNOWLEDGMENT}
Support from the Emmy Noether Program of the Deutsche Forschungsgemeinschaft is gratefully acknowledged. In addition, we thank the BMBF in Germany for support through the QuOReP program.

\begin{appendix}

\section{PROOF OF LEMMA \ref{lem:DVnonG}}\label{Apx:PrfLemma}
Lemma \ref{lem:DVnonG} states that any single-partite $d$-dimensional quantum state with finite $d$ and $d\geq2$ is non-Gaussian.
\begin{proof}
Consider such a general $d$-dimensional state in a pure-state decomposition and make use of the Fock basis of a $d$-dimensional subspace of the Fock space:
\begin{equation}\label{eq:Lemmaqudit}
\begin{aligned}
\hat{\rho} &=\sum_{n=1}^N p_n\ket{\psi_n}\bra{\psi_n}\,,\quad p_n>0\;\forall \;n\,;\;\sum_n^N \;p_n=1, \\
\ket{\psi_n} &=\sum_{m=0}^{d-1} c_{nm}\ket{m}\,,\qquad\; c_{nm}\in\mathbb C\;\forall \;n,m\,;\;\sum_m^{d-1} \;|c_{nm}|^2=1.
\end{aligned} 
\end{equation}
Non-Gaussianity of the state corresponds to non-Gaussianity of the characteristic function. This is equivalent to non-Gaussianity of the Wigner function, as (non-)Gaussian functions stay (non-)Gaussian under Fourier transformation \cite{Bronstein}. Hence, consider the Wigner function of $\hat{\rho}$,
\begin{equation}
\begin{aligned}
W(x,p) & :=\frac{1}{\pi}\int\limits_{-\infty}^\infty dy\,\mathrm{e}^{2ipy}\braket{x-y|\hat{\rho}|x+y} \\
& =\sum_{n=1}^N\frac{p_n}{\pi}\int\limits_{-\infty}^\infty dy\,\mathrm{e}^{2ipy} \sum_{k,l=0}^{d-1}c_{nl}c_{nk}^\ast\,\psi_l(x-y)\psi_k^\ast(x+y).
\end{aligned} 
\end{equation}
The position wavefunction of a Fock state $\ket{n}$ is given by
\begin{equation}
\psi_n(x)=\braket{x|n}=\frac{H_n(x)}{\sqrt{2^nn!\sqrt{\pi}}}\mathrm{e}^{-\frac{x^2}{2}},
\end{equation}
where $H_n(x)$ are the \textit{Hermite polynomials} \cite{Leon,Bronstein}. Therefore,
\begin{equation}
\begin{aligned}
W(x,p) & =\sum_{n=1}^N\frac{p_n}{\pi}\int\limits_{-\infty}^\infty dy\,\mathrm{e}^{2ipy}\,\mathrm{e}^{-\frac{(x-y)^2}{2}}\mathrm{e}^{-\frac{(x+y)^2}{2}} \\
& \times\sum_{k,l=0}^{d-1}c_{nl}c_{nk}^\ast\frac{H_k(x+y)H_l(x-y)}{\sqrt{2^{k+l}k!l!\pi}} \\
& =\frac{1}{\pi}\int\limits_{-\infty}^\infty dy\,\mathrm{e}^{2ipy}\;\mathrm{e}^{-x^2-y^2}\, \sum_{n=1}^N\,p_n \\
& \times\sum_{k,l=0}^{d-1}c_{nl}c_{nk}^\ast \frac{H_k(x+y)H_l(x-y)}{\sqrt{2^{k+l}k!l!\pi}}.
\end{aligned} 
\end{equation}
Now $\frac{1}{\pi}\int\limits_{-\infty}^\infty dy\,\mathrm{e}^{2ipy}$ is just a Fourier transform operator with respect to $y$, and $\mathrm{e}^{-x^2-y^2}$ is a Gaussian function. Hence, for $W(x,p)$ being Gaussian, the sum $P:=\sum_{n=1}^N\,p_n \sum_{k,l=0}^{d-1}c_{nl}c_{nk}^\ast \frac{H_k(x+y)H_l(x-y)}{\sqrt{2^{k+l}k!l!\pi}}$ also has to be Gaussian in both $x$ and $y$. However, $P$ is just a polynomial. Strictly speaking, it is a polynomial of \textit{finite} order. Hence, $P$ can never be an exponential function and therefore also not become Gaussian. There is one exception though. $P$ may be a polynomial of order zero, i.e., a constant. 

So, we can assume that $\hat{\rho}$ is Gaussian and show that this is the case if and only if $c_{nl}=0\;\forall\;l\geq1$. "$\Leftarrow$" is trivial, since for $c_{nl}=0\;\forall\;n\,,\;\forall\;l\geq1$, $\hat{\rho}=\ket{0}\bra{0}$, which is obviously Gaussian. For "$\Rightarrow$" we perform an induction with respect to the dimension $d$: 

\textit{Inductive Basis}. For $d=2$ there is only one term $x^{2(d-1)}=x^{2}$ in the sum $P$, which comes from $H_{d-1}(x+y)H_{d-1}(x-y)=H_{1}(x+y)H_{1}(x-y)$. Its coefficient is $\tilde{c}_{1}\sum_{n=1}^N p_n|c_{n,1}|^2$, where $\tilde{c}_{1}\neq0$ is a real constant due to the Hermite polynomials. If the state is Gaussian, $\sum_{n=1}^N p_n|c_{n,1}|^2=0$. Since $p_n>0\;\forall\;n$ and $|c_{n,1}|\geq0\;\forall\;n$, from $\sum_{n=1}^N p_n|c_{n,1}|^2=0$ follows $c_{n,1}=0\;\forall\;n$. Hence, $c_{nl}=0\;\forall\;l\geq1$ ($l$ can be only $0$ or $1$ for $d=2$) is proved as a necessary condition for Gaussianity for $d=2$.

\textit{Inductive Step}. Assume validity of the statement for $d=m$ and consider $d=m+1$. Then, there is only one term $x^{2((m+1)-1)}=x^{2m}$ in the sum $P$, which comes from $H_{(m+1)-1}(x+y)H_{(m+1)-1}(x-y)=H_{m}(x+y)H_{m}(x-y)$. Again with $\tilde{c}_{m}\neq0$ being a real constant the coefficient is $\tilde{c}_{m}\sum_{n=1}^N p_n|c_{n,m}|^2$. If the state is Gaussian, $\sum_{n=1}^N p_n|c_{n,m}|^2=0$. Since $p_n>0\;\forall\;n$ and $|c_{n,m}|\geq0\;\forall\;n$, from $\sum_{n=1}^N p_n|c_{n,m}|^2=0$ follows $c_{n,m}=0\;\forall\;n$. Then, from the inductive hypothesis it is known that $c_{nl}=0\;\forall\;1\leq l\leq d-2=m-1$. Hence, $c_{nl}=0\;\forall\;1\leq l\leq d-1=m$, which proves the validity of the statement for $d=m+1$ and hence for all finite $m$. 

Concluding, a $d$-dimensional quantum state with finite $d$ and $d\geq2$, written in the form \eqref{eq:Lemmaqudit}, is Gaussian if and only if $c_{nl}=0\;\forall\;l\geq1$, which corresponds to $\hat{\rho}=\ket{0}\bra{0}$. Therefore, $\hat{\rho}$ is Gaussian if and only if $\hat{\rho}=\ket{0}\bra{0}$, which is only one-dimensional. Hence, any $d$-dimensional quantum state with finite $d$ and $d\geq2$ is non-Gaussian.
\end{proof}

\section{AUXILIARY CALCULATIONS}\label{Apx:AuxCal}
The derivation of Eq. \eqref{eq:noisyQbQm3} becomes clear, when writing the thermal photon noise channel with aid of an ancilla Hilbert space (the environment E), and joint unitary evolution on the overall Hilbert space. Finally, a trace operation over the ancilla Hilbert space has to be performed:
\begin{equation}
\hat{\rho}'{}^{AB}=\mathrm{tr}_E[\hat{U}^{BE}\;\bigl( \ket{\phi}^{AB}\bra{\phi}\otimes\hat{\rho}^E_{thermal}\bigr) \;\hat{U}^{{BE}^\dagger}],
\end{equation}
with

(1) beam splitter unitary $\hat{U}^{BE}=e^{\theta(\hat{a}_E^\dagger \hat{a}_B-\hat{a}_B^\dagger \hat{a}_E)}$,

(2) environmental thermal state \\ $\hat{\rho}^E_{thermal}=\sum_{n=0}^\infty \frac{\braket{n_{th}}^n}{(1+\braket{n_{th}})^{n+1}}\,\ket{n}_{E}\bra{n}$ \cite{Scully},

(3) mean thermal photon number $\braket{n_{th}}$, 

(4) and beam splitter transmissivity $\eta=\cos^2\theta$.
Note that for $\braket{n_{th}}=0$, the photon loss channel is obtained, which is therefore just a limiting case of this channel. 
For the calculation, the following relations have been exploited:
\begin{align} \label{equse1}
\hat{U}^{{BE}^\dagger}\hat{U}^{BE} & =\mathbb{1}, \\ \label{equse2}
\hat{U}^{BE}(\hat{a}^{E^\dagger})^n\hat{U}^{{BE}^\dagger} & =(\sqrt{\eta}\hat{a}^{E^\dagger}-\sqrt{1-\eta}\hat{a}^{B^\dagger})^n, \\ \label{equse3} \hat{U}^{BE}\hat{D}^B(\alpha)\hat{U}^{{BE}^\dagger} & =\hat{D}^B(\sqrt{\eta}\alpha)\hat{D}^E(\sqrt{1-\eta}\alpha), \\ \label{equse4}
\hat{U}^{BE}\ket{0}^B\ket{0}^E & =\ket{0}^B\ket{0}^E, \\
\ket{\alpha}^B &=\hat{D}^B(\alpha)\ket{0}^B, \\
\ket{n}^E &=\frac{1}{\sqrt{n!}}(\hat{a}^{E^\dagger})^n\ket{0}^E, 
\end{align}
Furthermore, the binomial identity has been utilized.

The calculation of Eq. \eqref{eq:noisyQbQm4} proceeds similarly. However, instead of exploiting $\hat{\rho}^E_{thermal}=\sum_{n=0}^\infty \rho_n^{th}\,\ket{n}_{E}\bra{n}$ in the Fock basis, we made use of the coherent state basis. The Glauber-Sudarshan $P$ representation of the thermal environment is given by \cite{Scully}
\begin{equation}
P_{\hat{\rho}_{thermal}}(\alpha,\alpha^\ast)=\frac{1}{\pi\braket{n_{th}}}\mathrm{e}^{-\frac{|\alpha|^2}{\braket{n_{th}}}}.
\end{equation}
Therefore,
\begin{equation}
\hat{\rho}_{thermal}=\frac{1}{\pi\braket{n_{th}}}\int_{\mathbb C} d^2\alpha\,\mathrm{e}^{-\frac{|\alpha|^2}{\braket{n_{th}}}}\ket{\alpha}\bra{\alpha}.
\end{equation}
Expressing the thermal state in this form, the action of the thermal channel on an element $\ket{\alpha}\bra{\beta}$ is
\begin{equation}
\begin{aligned}
\$_{thermal}(\ket{\alpha}^B\bra{\beta}) &=\frac{1}{\pi\braket{n_{th}}}\int_{\mathbb C} d^2\gamma\,\mathrm{e}^{-\frac{|\gamma|^2}{\braket{n_{th}}}}\mathrm{tr}_E[\hat{U}^{BE}\;\bigl( \ket{\alpha}^B\bra{\beta} \\ &\otimes\ket{\gamma}^E\bra{\gamma}\bigr) \;\hat{U}^{{BE}^\dagger}].
\end{aligned}
\end{equation}
With the relations \eqref{equse1}, \eqref{equse3}, \eqref{equse4} and $\hat{U}^{BE}\hat{D}^E(\gamma)\hat{U}^{{BE}^\dagger} =\hat{D}^E(\sqrt{\eta}\gamma)\hat{D}^B(-\sqrt{1-\eta}\gamma)$ we find 
\begin{equation}
\begin{aligned}
\$_{thermal}(\ket{\alpha}^B\bra{\beta}) & = \frac{1}{\pi\braket{n_{th}}}\int_{\mathbb C} d^2\gamma\,\mathrm{e}^{-\frac{|\gamma|^2}{\braket{n_{th}}}} \\
&\times \braket{\sqrt{1-\eta}\beta +\sqrt{\eta}\gamma|\sqrt{1-\eta}\alpha+\sqrt{\eta}\gamma} \\
&\times \ket{\sqrt{\eta}\alpha-\sqrt{1-\eta}\gamma}^B\bra{\sqrt{\eta}\beta-\sqrt{1-\eta}\gamma}.
\end{aligned}
\end{equation}
Inserting this expression in the overall state $\hat{\rho}'{}^{AB}=({\mathbb 1}^A\otimes\$_{thermal}^B)\ket{\psi}^{AB}\bra{\psi}$ with $\ket{\psi}^{AB}=\frac{1}{\sqrt{2}}\Bigl(\ket{0}^A\ket{\alpha}^B+\ket{1}^A\ket{-\alpha}^B\Bigr)$ yields Eq. \eqref{eq:noisyQbQm4}.
\end{appendix}


\end{document}